\documentclass[lettersize,journal]{IEEEtran}
\usepackage{amsmath,amsfonts}
\usepackage{algorithmic}
\usepackage{algorithm}
\usepackage{array}
\usepackage[caption=false,font=normalsize,labelfont=sf,textfont=sf]{subfig}
\usepackage{textcomp}
\usepackage{stfloats}
\usepackage{amsthm}
\usepackage{url}
 \usepackage{booktabs}
\usepackage{verbatim}
\usepackage{graphicx}
\usepackage{cite}
\usepackage{makecell}
\newtheorem{proposition}{Proposition}
\usepackage{amssymb}
\usepackage{hyperref}
\begin{document}

\title{Lightweight Quantum Agent  for Edge Systems: Joint PQC and NOMA Resource Allocation}

\author{Yongtao~Yao, Wenjing Xiao, 
        Miaojiang~Chen, Anfeng Liu, Zhiquan Liu, Min Chen~\IEEEmembership{Fellow,~IEEE}, Ahmed Farouk, H. Herbert Song~\IEEEmembership{Fellow,~IEEE}

\thanks{Manuscript received xx, 2026; revised xx, 2026. This work was supported in part by the National Natural Science Foundation of China (62462002) and partially supported by the Natural Science Foundation of Guangxi, China (Nos. 2025GXNSFAA069958, 2025GXNSFBA069394), and the Key Research and Development Program of Guangxi (No. AD25069071). (Corresponding author:  Miaojiang Chen, H. Herbert Song)
}
\thanks{Yongtao Yao, Wenjing Xiao,  Miaojiang Chen are with the School of Computer, Electronics and Information, Guangxi University, and also with the Guangxi Key Laboratory of Multimedia Communications and Network Technology, Nanning 530004, China. E-mail: yongtao@st.gxu.edu.cn, wenjingx@gxu.edu.cn, mjchen\_cs@gxu.edu.cn.}

\thanks{Anfeng Liu is with School of Computer Science and Engineering, Central South University,
	Changsha 410083, China. E-mail:afengliu@mail.csu.edu.cn.}

\thanks{Zhiquan Liu is with College of Cyber Security, Jinan University, Guangzhou 510632, China. E-mail:zqliu@jnu.edu.cn.}

\thanks{Min Chen  is with the School of Computer Science and Engineering, South China University of Technology, Guangzhou 510006, China, and also with the Pazhou Laboratory, Guangzhou 510330, China. E-mail: minchen@ieee.org.}

\thanks{Ahmed Farouk is with the Faculty of Computers and Artificial Intelligence, Hurghada University, Hurghada 83523, Egypt (e-mail: ahmed.farouk@sci.svu.edu.eg).}

\thanks{H. Herbert Song is  with the Department of Information Systems, University of Maryland, Baltimore County (UMBC), Baltimore, MD 21250 USA. E-mail: h.song@ieee.org.}
}


\markboth{Journal of \LaTeX\ Class Files,~Vol.~14, No.~8, February~2026}%
{Yao \MakeLowercase{\textit{\textit{et al.}}}: Low Complexity Online Optimization for Quantum Secure NOMA MEC}



\maketitle

\begin{abstract}
In the context of quantum secure scenarios, existing research on mobile edge devices and intelligent computing 
and edge (ICE) systems based on the Non-Orthogonal Multiple Access (NOMA) communication model have overlooked the energy consumption overhead of Post-Quantum Cryptography (PQC) modules, and the high complexity of traditional resource allocation algorithms fails to meet the demands of real-time decision-making. To address these challenges, this paper proposes a lightweight agentic AI framework designed for online joint optimization within ICE-enabled mobile devices. The scheme constructs a multi-stage stochastic Mixed Integer Nonlinear Programming (MINLP) model that incorporates static power-consumption constraints for PQC modules. Based on Lyapunov optimization theory, the long-term optimization problem is decoupled, and a linear complexity algorithm is proposed to solve the nonconvex challenges of NOMA power allocation
. Simulation results verify that the proposed scheme significantly improves computational throughput while ensuring system queue stability and energy consumption constraints. Compared with traditional Successive Convex Approximation (SCA) algorithms, the complexity is reduced to $\mathcal{O}(N)$, achieving a speedup of approximately 46 times when the number of devices $N=35$, thereby meeting the real-time decision-making requirements in dynamic wireless environments.
\end{abstract}

\begin{IEEEkeywords}
Intelligent Computing and Edge, Secure 
Communications, Post Quantum Cryptography, Lyapunov optimization
\end{IEEEkeywords}

\section{INTRODUCTION}\label{tab:INTRODUCTION}
\IEEEPARstart{W}{ith}  the intelligent upgrading of consumer electronic terminals and the ubiquitous interconnection \cite{10819462}, computationally intensive applications such as augmented reality, real-time video inference, and large language model inference have become dominant \cite{10878496}. However, limited by local processing capabilities and battery capacity, mobile devices struggle to undertake these tasks independently under strict energy-consumption constraints \cite{8016573}. ICE, which deploys servers with strong computational capabilities at the wireless access side, allows users to offload tasks to edge nodes for execution. This significantly reduces communication latency and alleviates terminal energy consumption, establishing ICE as a key enabling technology for 5G and future networks \cite{7488250, 10959110, 7879258, 10373014}.

In the study of offloading strategies and resource management for multi-user ICE systems, traditional approaches often rely on Time Division Multiple Access (TDMA) architectures \cite{8334188}. These methods avoid interference through orthogonal resource allocation and jointly optimize offloading decisions and computational resources \cite{7762913}. However, the orthogonal nature of TDMA limits spectral efficiency: as device density increases, the available resources per user decrease, creating a throughput bottleneck, which is particularly pronounced in Internet of Things (IoT) scenarios. NOMA technology, through power-domain superposition coding and Successive Interference Cancellation (SIC), allows multiple users to share the same time-frequency resources, theoretically breaking the capacity limits of orthogonal access \cite{8114722, 7842433}. Introducing NOMA into the ICE offloading architecture can support concurrent multi-user transmissions without increasing bandwidth, thereby enhancing the system's computational throughput \cite{10884688}. Research indicates that under the same power conditions, the system throughput of NOMA-MEC can be 20\% to 40\% higher than that of TDMA-MEC, with the gains becoming more significant as user density increases \cite{8537962}.

Despite this potential, joint multi-user MEC optimization based on NOMA still faces three theoretical challenges that existing research has not fully addressed. First, binary offloading yields an MINLP with a $2^N$ solution space , and relaxation often fails to guarantee integrality. Secondly, channel fading and task randomness introduce uncertainty regarding future information. Conventional multi-stage programming requires knowledge of statistical distributions, which are difficult to estimate accurately in practice. Thirdly, in NOMA mode, the transmit powers of users are strongly coupled, making the power allocation subproblem highly nonconvex. SCA-based methods converge to KKT points but incur $\mathcal{O}(N^{3.5})$ per-iteration complexity, unsuitable for real-time decisions .

To address the challenges of online decision making and real-time requirements, Lyapunov stochastic network optimization theory provides a systematic set of tools \cite{neely2010stochastic}: by constructing a Lyapunov function and minimizing its single slot Drift Plus Penalty (DPP) upper bound, the multi-stage stochastic problem can be decoupled into per-frame deterministic subproblems. This ensures queue stability and satisfaction of power constraints without requiring prior statistical information. The Lyapunov-guided Deep Reinforcement Learning (LyDROO) framework proposed in \cite{9449944} deeply integrates DPP with deep reinforcement learning, generating relaxed offloading strategies via a DNN and combining them with a model-driven resource allocation module for precise evaluation, thereby significantly improving convergence speed and decision efficiency. DRL methods have demonstrated good adaptability to dynamic environments in wireless resource scheduling \cite{8714026, 8103164}, providing an essential foundation for constructing online optimization frameworks.







Meanwhile, the development of quantum computing poses a systemic threat to existing wireless communication security systems. Shor's algorithm demonstrates that universal quantum computers can solve RSA and elliptic curve cryptography, among other public key systems, in polynomial time \cite{shor1999polynomial, preskill2018quantum}. In ICE scenarios where devices upload sensitive tasks over open wireless channels, user privacy and data integrity are severely threatened if the uplink is subjected to quantum attacks. PQC hardware security modules, included in NIST's standardization framework, are currently recognized as a viable path against quantum attacks \cite{ducas2018crystals}.

Based on the background outlined above, this paper proposes a low-complexity online joint optimization scheme for quantum-secure NOMA ICE systems. The main contributions are as follows:

\begin{enumerate}
    \item \textbf{Quantum Aware System Modeling}: Based on the Non Orthogonal Multiple Access communication model, the constant power consumption $P_{\mathrm{qlc}}$ of the PQC security module is explicitly incorporated into the optimization framework of the multi-user ICE system. A multi-stage stochastic MINLP model is established that incorporates binary offloading decisions, local CPU frequency control, and NOMA transmit power allocation. By introducing a virtual energy queue, the long-term average energy consumption constraint is transformed into a queue-stability problem amenable to online tracking.

    \item \textbf{Linear Complexity Power Allocation Algorithm}: A NOMA power allocation algorithm based on greedy backward induction is proposed. By deriving a recursive analytical expression for the inter-user interference terms, the global nonconvex power allocation problem is strictly decomposed into $|\mathcal{M}_1|$ independent single-variable convex subproblems, each possessing a closed-form optimal solution. The overall computational complexity of the algorithm is $\mathcal{O}(N)$, achieving approximately a 46× speedup compared to the baseline SCA algorithm with complexity $\mathcal{O}(K_{\mathrm{iter}} \cdot N^{3.5})$ for $N=35$, effectively meeting the latency requirements for real time per frame decision making.

    \item \textbf{Quantitative Analysis of Quantum Power Impact}: Through systematic simulations, the influence of $P_{\mathrm{qlc}}$ on the optimal offloading ratio, virtual energy queue evolution, and system throughput is quantitatively revealed. The adaptive adjustment mechanism of the Lyapunov controller in response to perturbations from quantum security power consumption is elucidated, providing a basis for the practical deployment and parameter configuration of quantum secure ICE systems.
\end{enumerate}

The remainder of this paper is organized as follows: Section \ref{sec:RELATED WORK} reviews related work; Section \ref{sec:SYSTEM MODEL} describes the system model and problem formulation; Section \ref{sec:LyDROO Framework} elaborates on the LyDROO-based online lightweight agentic AI framework; Section \ref{sec:Power Allocation} proposes the low complexity power allocation algorithm; Section \ref{sec:evaluation} presents simulation results and analysis; Section \ref{sec:CONCLUSION} concludes the paper.

\section{RELATED WORK}\label{sec:RELATED WORK}

\subsection{Resource Management in NOMA-enabled ICE Systems.}
By deploying abundant computational resources at the wireless access edge, ICE enables mobile devices to offload computationally intensive, latency-sensitive tasks to edge nodes, making it a key enabling technology for mitigating terminal energy bottlenecks and empowering 5G and future networks. Hasan \textit{\textit{et al.}} \cite{10415079} summarized relevant architectures and challenges. In early research on multi-user ICE resource allocation, traditional architectures mostly relied on TDMA or FDMA to jointly optimize offloading decisions and computational resources. However, the spectral efficiency of orthogonal access is limited, easily forming throughput bottlenecks in scenarios with massive concurrent access from IoT devices. To overcome the orthogonal capacity limits from an information theoretic perspective, NOMA has been introduced into ICE architectures. Shi \textit{et al.} \cite{9893789} pioneered the study of latency minimization problems in NOMA-ICE systems. Wang \textit{et al.} \cite{10101816} used DDPG to minimize latency in NOMA ICE via iterative optimization.  Li \textit{et al.} \cite{10413502} analyzed the computational complexity of such joint optimization problems. Furthermore, the strong coupling in NOMA renders the power allocation subproblem highly nonconvex. While existing algorithms based on SCA can converge to KKT points. Convex optimization algorithms are endowed with well-established foundational convergence properties, yet their per-iteration superlinear computational complexity of $\mathcal{O}(N^{3.5})$ falls short of meeting millisecond-level real-time decision-making demands in fast-fading wireless channel scenarios. Shi \textit{et al.} \cite{11433062} proposed a convex evolutionary alternating optimization for joint user selection and resource allocation in multi-cell NOMA-MEC, achieving near-optimal service satisfaction within few iterations. Zhang \textit{et al.} \cite{9194041} addressed the joint location and power optimization for NOMA-UAV networks with mobility-dependent decoding order, developing an SCA-based iterative scheme that maximizes the sum rate with asymptotic performance guarantees.

\subsection{Online Optimization and Deep Reinforcement Learning in Mobile Edge Computing.}
Due to the randomness of wireless channel fading and task arrivals \cite{11458888,xu2024harvesting}, ICE systems face significant uncertainty regarding future state information during actual operation. Traditional Markov decision processes or dynamic programming methods are often limited by the lack of environmental statistical distribution and the curse of dimensionality. To address this issue, Zheng \textit{et al.} \cite{10019272} studied the total computation delay minimization problem in wireless-powered multi-access edge computing networks, jointly optimizing task offloading decisions and wireless resource allocation, and proposed a low-complexity online offloading algorithm based on deep reinforcement learning. Wang \textit{et al.} \cite{9999714} addressed the delay optimization problem for UAV-assisted terahertz edge computing networks, proposing deep reinforcement learning algorithms based on DDQN. Nie \textit{et al.}\cite{9563249} focused on the power optimization problem for multi-UAV-assisted edge computing systems and proposed a semi-distributed multi-agent federated reinforcement learning algorithm with privacy protection mechanisms. To address the inefficiency of traditional heuristic algorithms for solving nonconvex MINLP problems, Bie \textit{et al.} \cite{9449944} proposed the LyDROO framework, which deeply integrates Lyapunov optimization with DRL to generate relaxed offloading policies via DNNs, thereby significantly improving decision-making and convergence efficiency. Chen \textit{et al.} \cite{11450522} further extended this line by proposing a DDPG-Attention-based algorithm that integrates Lyapunov optimization and DRL for joint resource allocation and trajectory optimization in NOMA-enabled HAP-UAV-MEC systems. Liu \textit{et al.} \cite{10736570} further addressed DNN partitioning and task offloading in dynamic vehicular networks by proposing a Lyapunov-guided multi-agent diffusion-based DRL algorithm, integrating convex optimization subroutines for enhanced learning efficiency.By combining the classic Actor-Critic architecture.  With an online computing framework that integrates model-driven evaluation with data-driven generation, this paradigm has become an essential approach for solving dynamic MEC joint scheduling problems.

\subsection{Quantum Security Challenges in Edge Networks.}
Meanwhile, the rapid advancement of quantum computing poses a systemic threat to existing security foundations. Shor \textit{et al.} \cite{shor1999polynomial} demonstrated that a universal quantum computer could break public-key cryptosystems in polynomial time. Preskill \textit{et al.} \cite{preskill2018quantum} confirmed this long-term threat despite immature hardware. In I scenarios, devices transmit sensitive data over open wireless channels, making them vulnerable to eavesdropping and quantum attacks. Prateek \textit{et al.} \cite{prateek2023quantum} examined quantum vulnerabilities in edge communication systems. Recent works have explored quantum-safe edge communication schemes and hybrid quantum-classical reinforcement learning for secure resource allocation, yet the integration of PQC modules into resource-constrained edge devices remains an open challenge.

PQC algorithm hardware security modules have been included in NIST standardization. Chen \textit{et al.} \cite{chen2016report} detailed the NIST post-quantum cryptography standardization process. As a recognized path against quantum attacks. However, for resource-constrained terminals, PQC modules add fixed circuit power overhead. Most existing online optimization studies for NOMA-based I overlook this critical secure energy cost.

Therefore, current research exhibits significant limitations on two levels: Model based optimization methods, while providing theoretical guarantees, have an iterative complexity as high as $\mathcal{O}(N^{3.5})$, making them difficult to adapt to real-time decision making needs in dynamic wireless environments; Conversely, purely data driven DRL methods, although possessing dynamic adaptability, suffer from low sample efficiency and slow convergence due to neglecting the system's physical model. Furthermore, existing work generally fails to incorporate the constant power consumption $P_{\mathrm{qlc}}$ of the quantum security module into system modeling, thereby undermining the effectiveness of the optimization framework in quantum secure scenarios.

\section{SYSTEM MODEL}\label{sec:SYSTEM MODEL}

We consider an edge server (ES) assisting $N$ wireless devices (WDs) with computation within equal-length time frames $T$. In the $t$-th time frame, let $A_i^t$ denote the amount of raw task data arriving at the data queue of the $i$-th WD. We assume that the arrivals $A_i^t$ follow a general independent and identically distributed (i.i.d.) process with bounded second moments.

\begin{figure}[htbp]
    \centering
    \includegraphics[width=0.45\textwidth]{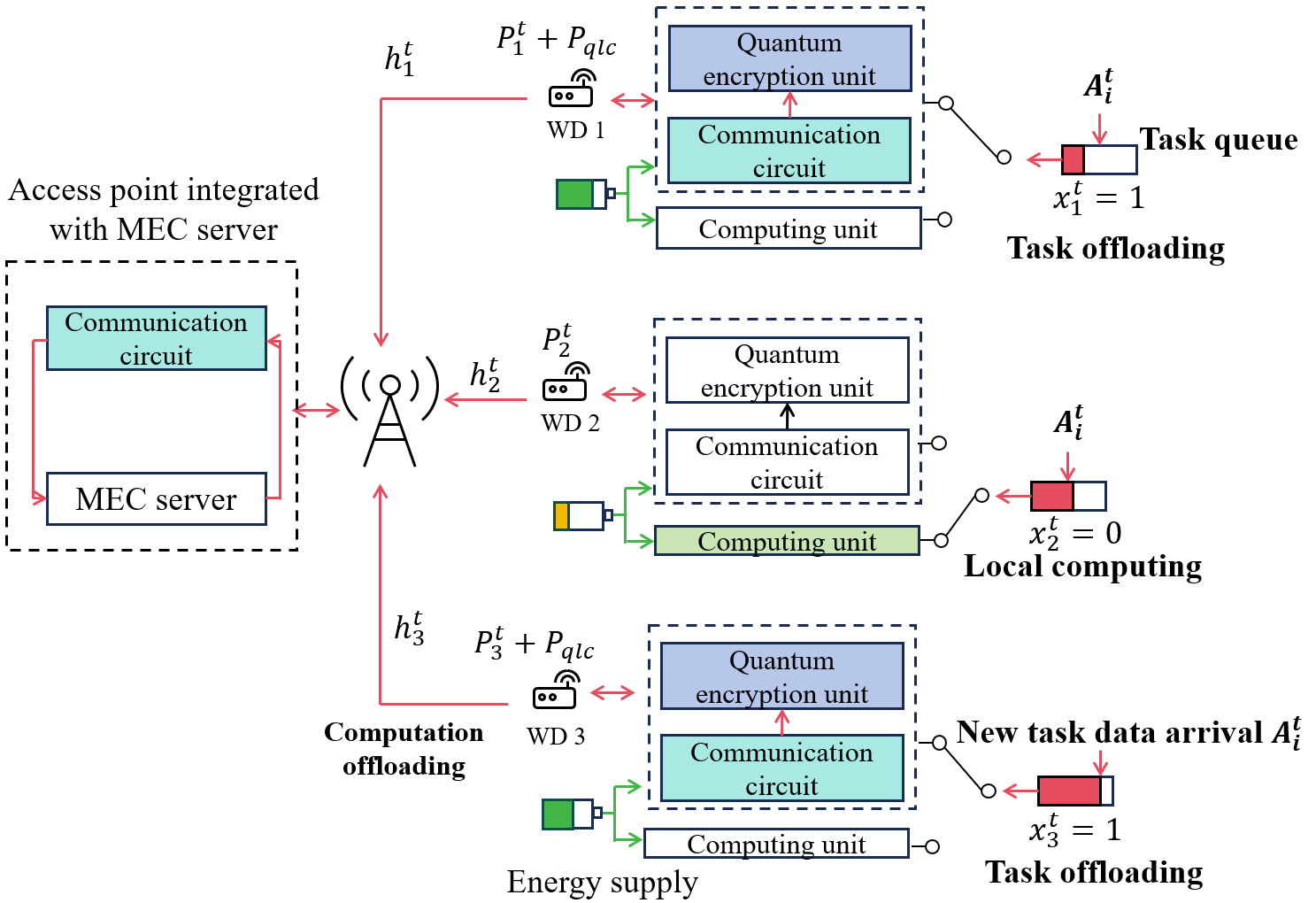} 
    \caption{Quantum secure NOMA access mobile edge computing network.}
    \label{fig: bg}
\end{figure}

Let $h_i^t$ represent the channel gain between the $i$-th WD and the ES. Under the block fading assumption, $h_i^t$ remains constant within one time frame but varies independently across different time frames. In the $t$-th time frame, it is assumed that the target WD $i$ processes $D_i^t$ bits of data and generates computation output at the end of the time frame. We believe all WDs adopt a binary computation offloading rule, meaning that the raw data is either processed entirely locally or entirely offloaded to the ES for remote processing. We use a binary variable $x_i^t$ to denote the offloading decision, where $x_i^t = 1$ and $0$ indicate that WD $i$ performs computation offloading and local computation, respectively. For example, in Figure \ref{fig: bg}, wireless devices 1 and 3 perform task offloading, while wireless device 2 performs local computation.

For Local Computation ($x_i^t = 0$)
When a WD processes data locally, we denote the local CPU frequency as $f_i^t$, which is upper-bounded by $f_i^{\max}$. The amount of raw data processed locally and the energy consumed within a time frame are calculated, respectively, as:
\begin{equation}
D_{i,L}^{t}=f_{i}^{t}T/\phi , E_{i,L}^{t}=\kappa \left( f_{i}^{t}\right) ^{3}T, \forall x_{i}^{t}=0,
\end{equation}
where the parameter $\phi > 0$ represents the number of computation cycles required to process one bit of raw data, and $\kappa > 0$ denotes the computational energy efficiency parameter.

For Computation Offloading ($x_i^t = 1$)
When wireless devices offload tasks to the edge server for execution, they simultaneously transmit data to the edge server using NOMA technology over the shared bandwidth $W$. Let $P_i^t \in [0, P_i^{\max}]$ denote the transmit power. Assuming the channel gains are sorted in descending order ($h_1^t \geq h_2^t \geq \cdots \geq h_N^t$), the edge server employs the SIC technique to decode the signals. In this case, the Signal to Interference plus Noise Ratio (SINR) for the $i$-th wireless device is:

\begin{equation}\text{SINR}_i^t = \frac{P_i^t h_i^t}{N_0 + \sum_{j=i+1}^{N} P_j^t h_j^t}, \quad \forall x_i^t = 1,\end{equation}
where $N_0$ is the noise power. The parameter $\eta_q \in (0, 1]$ is defined as the effective payload factor after quantum-secure coding, which characterizes the proportion of the original task data in the total transmitted bitstream after the complex encapsulation and signature redundancy introduced by PQC algorithms, reflecting the impact of security overhead on communication effectiveness.

\begin{equation}
D_{i,O}^t = \frac{\eta_q W}{v_u} \log_2 \left(1 +\text{SINR}_i^t \right), \quad \forall x_i^t = 1,
\end{equation}
where $v_u \geq 1$ is the communication overhead coefficient. Considering that the quantum security module introduces an additional constant circuit power consumption $P_{i,qlc} > 0$, the total energy consumed during data offloading is: 
\begin{equation}
E_{i, O}^t =( P_i^t + P_{i,qlc})T, \quad \forall x_i^t = 1,
\end{equation}
Combining the above two execution modes, within the $t$-th time slot, the total computation rate $r_i^t$ and total power consumption $e_i^t$ of the $i$-th wireless device can be expressed respectively as:

\begin{align}
   r_{i}^{t} 
   &=\frac{D_{i}^{t}}{T} \\
   &=\frac{\left(1-x_{i}^{t}\right)D_{i,L}^{t}+x_{i}^{t}D_{i,O}^{t}}{T}\nonumber\\
   &=\frac{\left(1-x_{i}^{t}\right) f_{i}^{t}}{\phi}\nonumber+x_{i}^{t}\frac{\eta_qW}{v_{u}}\log_2\left(1 +\text{SINR}_i^t \right) ,\nonumber
\end{align}

\begin{align}
      e_i^t 
      &= \frac{E_i^t}{T} \\
      &= \frac{(1-x_i^t)E_{i,L}^{t}+x_i^t E_{i,O}^{t}}{T}\nonumber\\
      &= (1-x_i^t) \kappa (f_i^t)^3 + x_i^t (P_i^t+ P_{i,qlc}).\nonumber
\end{align}



Let $Q_i(t)$ denote the length of the pending data queue for the $i$-th wireless device at the beginning of the $t$-th time slot. The queue dynamics can be modeled as:
$$Q_i(t+1) = \max\{Q_i(t) - \tilde{D}_i^t + A_i^t, 0\}, \quad i = 1, 2, \cdots,$$
where $\tilde{D}_i^t = \min(Q_i(t), D_i^t)$, and the initial queue state is $Q_i(1) = 0$. By introducing the data causality constraint $D_i^t \leq Q_i(t)$, it is guaranteed that $Q_i(t) \geq 0$ holds for any time instant $t$, thus the queue dynamics model can be simplified to:
$$Q_i(t+1) = Q_i(t) - D_i^t + A_i^t, \quad i = 1, 2, \cdots.$$
A discrete time queue $Q_i(t)$ is said to be strongly stable if its time average queue length satisfies $\lim_{K \to \infty} \frac{1}{K} \sum_{t=1}^{K} \mathbb{E}[Q_i(t)] < \infty$. \cite{georgiadis2006resource} We introduce $N$ virtual energy queues $\{Y_{i}(t)\}_{i=1}^{N}$ for each wireless device (WD), with one virtual queue corresponding to each device. Specifically, we set the initial value $Y_{i}(1)=0$ and update the queue as follows:
$$Y_{i}(t+1)=\operatorname* {max}\left( Y_{i}(t)+\nu e_{i}^{t}-\nu \gamma _{i},0\right),\ i=1,2,\cdots $$
where $\nu$ is the scaling factor, and $\gamma_{i}$ is the device's average energy consumption threshold.

In this paper, we aim to design an online computation offloading algorithm that maximizes the long-term average weighted sum computation rate across all WDs, subject to constraints on long-term data queue stability and average power. To achieve this goal, we formulate this problem as the following multi-stage stochastic MINLP problem:

\begin{subequations}
\label{prob:optimization_problem} 
\begin{align}
&\underset{\mathbf{x},\mathbf{f},\mathbf{p}}{\operatorname{maximize}}
\quad \lim_{K\rightarrow\infty}\frac{1}{K} \sum_{t=1}^{K}\sum_{i=1}^{N}c_{i}r_{i}^{t} \label{obj:total_reward} \\ 
& \text{subject to} \nonumber \\
& \qquad (1-x_{i}^{t})f_{i}^{t}/\phi + x_i^t\frac{\eta_qW}{v_u} \log_2\Bigl( \notag \\
& \qquad \qquad 1+\text{SINR}_i^t\Bigr) \leq Q_{i}(t), \quad \forall i,t, \label{con:data_causality} \\
& \qquad \lim_{K\rightarrow\infty}\frac{1}{K} \sum_{t=1}^{K}
\mathbb{E}\Bigl[ (1-x_{i}^{t})\kappa(f_{i}^{t})^{3} \notag \\
& \qquad \qquad + x_{i}^{t}(P_i^t+ P_{i,qlc})\Bigr] \leq \gamma_{i}, \quad \forall i,t, \label{con:energy} \\
& \qquad \lim_{K\rightarrow\infty}\frac{1}{K} \sum_{t=1}^{K}
\mathbb{E}\left[ Q_{i}(t)\right] < \infty, \quad \forall i, \label{con:stability} \\
& \qquad x_{i}^{t} \in \{0,1\}, \quad \forall i,t, \label{con:variable_domain}\\
& \qquad 0 \leq f_{i}^{t} \leq f_{i}^{\max}, \quad \forall i,t, \label{con:instant_power} \\
& \qquad 0 \leq P_i^t \leq P_i^{\text{max}}, \quad \forall i,t. \label{con:power_max}
\end{align}
\end{subequations}

Where $c_{i}$ represents the fixed weight for the $i$-th wireless device. In the above formulation, the data causality constraint ensures that the amount of data processed in any given time frame does not exceed the amount of data currently in the queue. It should be noted that at the optimal solution, if $x_{i}^{t}=0$ (i.e., local computation is performed), then the transmit power $P_i^t=0$ and the edge computing rate term becomes zero; similarly, if $x_{i}^{t}=1$ (i.e., computation offloading is performed), the local computing frequency $f_{i}^{t}=0$ must hold. The average power constraint limits the long-term energy consumption of the devices, where $\gamma_{i}$ is the power threshold. The data queue stability constraint guarantees the strong stability of the queues. We normalize the slot duration to 1, i.e., \(T = 1\). Consequently, the amount of data processed or transmitted within one slot is numerically equal to the corresponding rate. In the following, we will adopt the LyDROO framework proposed by Bi \textit{et al.} \cite{9449944} to solve the above problem.

\section{Lightweight Agentic AI Framework}\label{sec:LyDROO Framework}
\subsection{Lyapunov Framework}\label{sec:lyapunov}
To jointly control the data queues and virtual energy queues, we define the overall system queue backlog state as $Z(t)=\{Q(t), Y(t)\}$. Next, we introduce the Lyapunov function $L(Z(t))$ and its one step drift $\Delta L(Z(t))$ as follows:
\begin{equation}
L\left(\mathbf{Z}(t)\right)=\frac{1}{2}\left(\sum_{i=1}^{N}Q_{i}(t)^{2}+\sum_{i=1}^{N}Y_{i}(t)^{2}\right),
\end{equation}

\begin{equation}
\Delta L\left(Z(t)\right)=\mathbb{E}\left\{L\left(Z(t+1)\right)-L\left(Z(t)\right)\middle| Z(t)\right\}.
\end{equation}

To maximize the time-averaged computation rate while ensuring the stability of the queue $Z(t)$, we adopt the DPP minimization method. Specifically, at each time frame $t$, we seek to minimize an upper bound of the following DPP expression:

\begin{equation}
\Lambda\left(\mathbf{Z}(t)\right)=\Delta L\left(\mathbf{Z}(t)\right)-V\cdot\sum_{i=1}^{N}\mathbb{E}\left\{c_{i}r_{i}^{t}\middle|\mathbf{Z}(t)\right\},
\end{equation}
where $V>0$ is an importance weight used to trade off between queue delay and computation rate (the penalty term). Next, we derive an upper bound for $\Lambda(Z(t))$. First, by squaring both sides of the queue dynamics equations $Q_{i}(t+1)=Q_{i}(t)-D_{i}^{t}+A_{i}^{t}$ and $Y_{i}(t+1)=\max\left(Y_{i}(t)+\nu e_{i}^{t}-\nu\gamma_{i},0\right)$ respectively, we can obtain the evolution relations for $Q_{i}(t)$ and $Y_{i}(t)$:
$$Q_i(t+1)^2 = Q_i(t)^2 + 2Q_i(t)\left(A_i^t-D_i^t\right) + \left(A_i^t-D_i^t\right)^2\notag,$$
$$Y_i(t+1)^2 = Y_i(t)^2 + 2\nu Y_i(t)\left(e_i^t-\gamma_i\right) + \nu ^2\left(e_i^t-\gamma_i\right)^2\notag.$$

Transforming these and summing over all devices yields the difference equation for the queues:

\begin{equation}
\begin{split}
\frac{1}{2}\sum_{i=1}^N Q_i(t+1)^2 - \frac{1}{2}\sum_{i=1}^N Q_i(t)^2 = \\
\sum_{i=1}^N Q_i(t)\left(A_i^t-D_i^t\right) + \frac{1}{2}\sum_{i=1}^N\left(A_i^t-D_i^t\right)^2,
\end{split}
\end{equation}

\begin{equation}
\begin{split}
\frac{1}{2}\sum_{i=1}^N Y_i(t+1)^2 - \frac{1}{2}\sum_{i=1}^N Y_i(t)^2 = \\ 
\nu \sum_{i=1}^N Y_i(t)\left(e_i^t-\gamma_i\right) + \frac{1}{2}\nu^2\sum_{i=1}^N\left(e_i^t-\gamma_i\right)^2.
\end{split}
\end{equation}

To analyze the stability of the two types of queues separately, we define the Lyapunov functions and drift terms for the data queues and the energy queues, respectively:
\begin{equation}
L\left(\mathbf{Q}(t)\right)= \frac{1}{2}\sum_{i=1}^N Q_i(t)^2,
\end{equation}

\begin{equation}
\Delta L\left(\mathbf{Q}(t)\right)=\mathbb{E}\left\{L\left(\mathbf{Q}(t+1)\right)-L\left(\mathbf{Q}(t)\right)\middle|\mathbf{Z}(t)\right\},
\end{equation}

\begin{equation}
L\left(\mathbf{Y}(t)\right)=\frac{1}{2}\sum_{i=1}^{N} Y_{i}(t)^{2},
\end{equation}

\begin{equation}
\Delta L\left(\mathbf{Y}(t)\right)=\mathbb{E}\left\{L\left(\mathbf{Y}(t+1)\right)-L\left(\mathbf{Y}(t)\right)\middle|\mathbf{Z}(t)\right\}.
\end{equation}
Taking the conditional expectation of the above squared difference equations, we can obtain the upper bounds for the two drift terms respectively:
$$\Delta L\left(\mathbf{Q}(t)\right)\leq B_{1}+\sum_{i=1}^{N} Q_{i}(t)\mathbb{E}\left[\left(A_{i}^{t}-D_{i}^{t}\right)\middle|\mathbf{Z}(t)\right],$$
where the constant $B_{1}$ arises from the upper bound on the expectation of the square term, and $r_{i}^{\max}=\mathbb{E}\left[\frac{\eta_qW}{v_{u}}\log_{2}\left(1+\frac{P_{i}^{\max}h_{i}^{t}}{N_0}\right)\right]$, derived as follows:
\begin{align*}
&\frac{1}{2}\sum_{i=1}^{N}\mathbb{E}\left[\left(A_{i}^{t}-D_{i}^{t}\right)^{2}\right] \leq \frac{1}{2}\sum_{i=1}^{N}\mathbb{E}\left[\left(A_{i}^{t}\right)^{2}+\left(D_{i}^{t}\right)^{2}\right]\\
&\leq \frac{1}{2}\sum_{i=1}^{N}\left(\eta_{i}+\left[T\max\left\{f_{i}^{\max}/\phi,r_{i}^{\max}\right\}\right]^{2}\right)= B_{1}.
\end{align*}

Similarly, for the drift term of the virtual energy queues, we have:
$$\Delta L\left(\mathbf{Y}(t)\right)\leq B_{2}+\nu \sum_{i=1}^{N} Y_{i}(t)\mathbb{E}\left[e_{i}^{t}-\gamma_{i}\middle|\mathbf{Z}(t)\right],$$
where the constant $B_{2}$ is defined as:

\begin{align*}
&\frac{1}{2}\nu^2\sum_{i=1}^{N}\mathbb{E}\left[\left(e_{i}^{t}-\gamma_{i}\right)^{2}\right] \\
&\leq \frac{1}{2}\nu^2\sum_{i=1}^{N}\left[\left(\max\left\{\kappa\left(f_{i}^{\max}\right)^{3},P_{i}^{\max}+ P_{i,qlc}\right\}\right)^{2}+\gamma_{i}^{2}\right]\\
&= B_{2}.
\end{align*}

Let the constant term be $\hat{B}=B_{1}+B_{2}$. Adding the drift upper bounds for the data queues and energy queues yields the upper bound for the one-step drift of the overall system state:

\begin{align*}
\Delta L\left(\mathbf{Z}(t)\right) 
&=\Delta L\left(\mathbf{Y}(t)\right) +\Delta L\left(\mathbf{Q}(t)\right) \\
&\leq\hat{B}+\sum_{i=1}^{N} Q_{i}(t)\mathbb{E}\left[\left(A_{i}^{t}-D_{i}^{t}\right)\middle|\mathbf{Z}(t)\right]\\
&+\nu\sum_{i=1}^{N} Y_{i}(t)\mathbb{E}\left[e_{i}^{t}-\gamma_{i}\middle|\mathbf{Z}(t)\right] .\nonumber
\end{align*}

Finally, substituting the total drift upper bound into the DPP expression $\Lambda\left(\mathbf{Z}(t)\right)$ and combining terms yields:

\begin{align*}
\Lambda\left(\mathbf{Z}(t)\right)
&= \Delta L\left(\mathbf{Z}(t)\right)-V\cdot\sum_{i=1}^{N}\mathbb{E}\left\{c_{i}r_{i}^{t}\middle|\mathbf{Z}(t)\right\}\\
&\leq \hat{B} + \sum_{i=1}^{N} Q_i(t) \mathbb{E}\left[A_i^t - D_i^t \mid \mathbf{Z}(t)\right] \\
&+ \nu \sum_{i=1}^{N} Y_i(t) \mathbb{E}\left[e_i^t - \gamma_i \mid \mathbf{Z}(t)\right]  \\
&-V \sum_{i=1}^{N} \mathbb{E}\left[c_i r_i^t \mid \mathbf{Z}(t)\right] \nonumber\\
&=\hat{B}+ \sum_{i=1}^N Q_i(t)\mathbb{E}\left[A_i^t \mid \mathbf{Z}(t)\right]  \\
&-\sum_{i=1}^N \mathbb{E}\left[(Q_i(t) + V c_i)r_i^t - \nu Y_i(t)e_i^t \mid \mathbf{Z}(t)\right] \\
&- \nu\sum_{i=1}^N Y_i(t)\gamma_i .\nonumber
\end{align*}

At the $t$-th time frame, according to the principle of opportunistically minimizing an expectation \cite{berger2013statistical}, minimizing the above upper bound is equivalent to maximizing the term containing the control variables in the current time slot, i.e.:
$$\sum_{i=1}^{N}\left(Q_{i}(t)+V c_{i}\right)r_{i}^{t}-\nu\sum_{i=1}^{N}Y_{i}(t)e_{i}^{t}.$$

Combining the above derivations, the original complex multi-stage stochastic problem is successfully decoupled. The system only needs to solve the following per-frame deterministic subproblem at each time frame $t$:

\begin{subequations}
\begin{align}
&\underset{\mathbf{x}^{t},\mathbf{f}^{t},\mathbf{P}^{t}}{\text{maximize}} \sum_{i=1}^{N}\left(Q_{i}(t)+V c_{i}\right)r_{i}^{t}-\nu \sum_{i=1}^{N}Y_{i}(t)e_{i}^{t} \nonumber \\
&\text{subject to} \nonumber \\
&(1 - x_i^t) \frac{f_i^t}{\phi} + x_i^t \frac{\eta_qW}{v_u} \log_2\left(1+\text{SINR}_i^t\right) \leq Q_i(t), \quad \forall i, \\
&0 \leq P_i^t \leq P_i^{\text{max}},\quad\forall i, \\
&0 \leq f_i^t \leq f_i^{\text{max}},\quad\forall i,\\
&x_{i}^{t}\in\{0,1\},\quad\forall i. 
\end{align}
\end{subequations}

\begin{figure*}[htbp]
    \centering
    \includegraphics[width=0.95\textwidth]{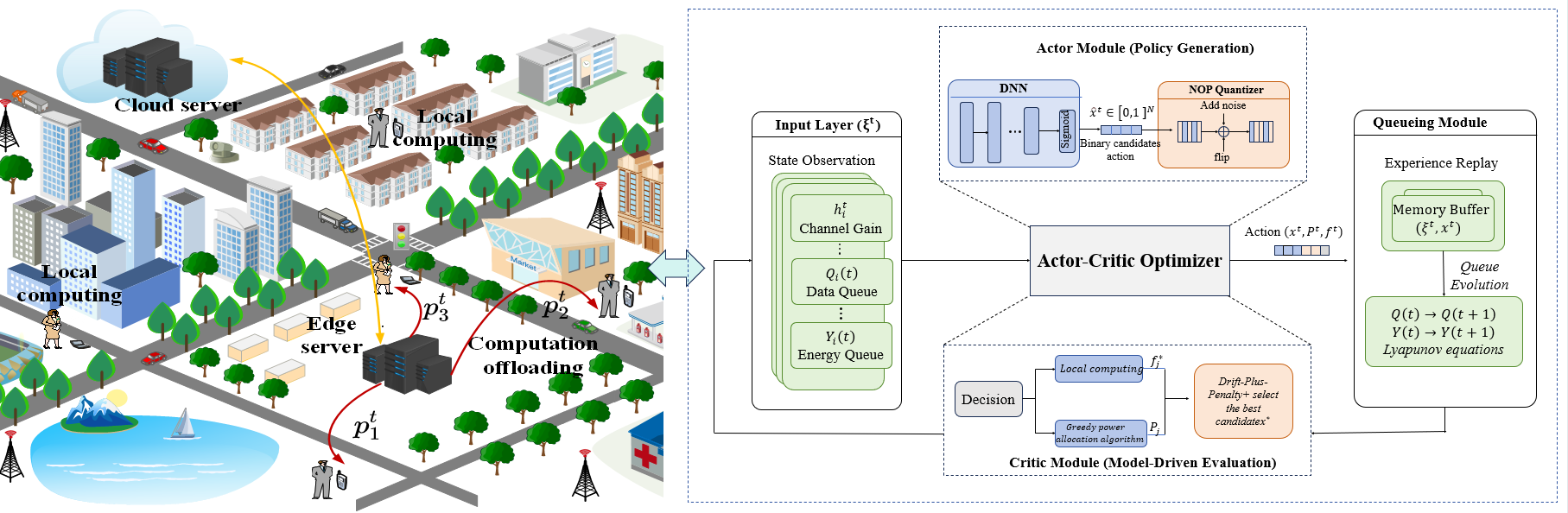} 
    \caption{Overall Architecture of the Proposed LyDROO-Based Online Lightweight Agentic AI  Framework.}
    \label{fig:singlex}
\end{figure*}

\subsection{Actor-Critic Module}\label{sec:AC}
In the $t$-th time frame, to solve the deterministic subproblem, the system first observes the current environmental state $\xi^t = \{h_i^t, Q_i(t), Y_i(t)\}_{i=1}^N$, which consists of the channel gains, data queue backlogs, and virtual energy queue states for all wireless devices (WDs). If the offloading decision variables $\mathbf{x}^t$ are fixed, the problem can be transformed into a pure resource allocation problem. Therefore, the core of solving the original problem lies in finding the optimal binary offloading decision combination.
To avoid the catastrophic computational delay caused by exhaustively enumerating $2^N$ offloading combinations, we adopt a DRL algorithm based on the Actor Critic \cite{6313077} structure, constructing a low complexity policy that maps the input state $\xi^t$ to the optimal offloading action. The algorithm mainly consists of four modules: the actor network module, the critic network module, the policy update module, and the queue update module. The overall structure is shown in Figure \ref{fig:singlex}.

1) Actor Module:  
The Actor module consists of a deep neural network (DNN) and an action quantizer. At the beginning of the \( t \)-th time frame, the DNN receives the environmental observation state \( \xi^t \) as input and outputs a relaxed continuous offloading decision vector \( \hat{x}^t \in [0,1]^N \). The input-output mapping can be expressed as:  
$$\Pi_{\theta^t}: \xi^t \mapsto \hat{x}^t = \{\hat{x}_i^t \in [0,1], i=1,\cdots,N\},$$  
where \( \theta^t \) represents the parameters of the DNN at the current time frame. The network's output layer uses the Sigmoid activation function to ensure bounded outputs. Subsequently, the quantizer converts the continuous vector \( \hat{x}^t \) into \( M_t \) feasible binary candidate offloading actions. To achieve a good balance between exploration and exploitation during algorithm training, we adopt the Noisy Order Preserving (NOP) quantization method \cite{9449944}. This method not only preserves the network's probabilistic tendencies in the output but also generates additional exploratory actions by injecting Gaussian noise, thereby effectively preventing the model from becoming trapped in local optima.


2) Critic Module:  
Unlike traditional Actor Critic algorithms that use model-free DNNs to estimate value functions, this module fully leverages system model information to accurately evaluate the \( M_t \) candidate binary actions generated by the Actor module through analytical solutions to the optimal resource allocation problem. For a given candidate action \( x_j^t \), the Critic module executes the power allocation algorithm designed later in this paper to determine the NOMA transmission power \( \mathbf{P}^t \) and local computation frequency \( \mathbf{f}^t \) under the current offloading combination. Subsequently, the Critic module computes the maximum objective function value for the candidate action. It selects the action with the highest objective value as the final executed action \( \mathbf{x}^t \) for the current time frame. This accurate evaluation mechanism, based on a real optimization model, dramatically enhances the robustness of DRL training and accelerates convergence. To balance system performance and computational complexity, the number of quantized actions \( M_t \) is adaptively adjusted during training. As the DNN policy gradually converges to the optimum, \( M_t \) decreases, significantly reducing the computational latency during the online inference phase.

3) Policy Update Module:  
The Policy Update module employs an experience replay mechanism with a replay memory to update the DNN's parameters. The environmental observation state and the optimal action selected by the Critic module \( (\xi^t, \mathbf{x}^t) \) are stored as a labeled data sample in the memory. To avoid model overfitting, the DNN parameters are updated every \( \delta_T \) time frames. During training, the system randomly samples a batch of samples from the memory. It updates the network parameters \( \theta^t \) using the Adam optimizer, minimizing the average cross-entropy loss. The loss function is defined as:  


\begin{equation*}
\begin{aligned}
LS(\theta^t) = & -\frac{1}{|\mathcal{S}^t|} \sum_{\tau \in \mathcal{S}^t} \left[ (\mathbf{x}^\tau)^\top \log \Pi_{\theta^t}(\xi^\tau) \right. \\
& \left. + (\mathbf{1}-\mathbf{x}^\tau)^\top \log(\mathbf{1}-\Pi_{\theta^t}(\xi^\tau)) \right],
\end{aligned}
\end{equation*}

where \( |\mathcal{S}^t| \) is the size of the sampled batch.

4) Queueing Module:  
The system executes actual computational tasks based on the joint offloading and resource allocation decisions (including NOMA-based transmission power and local computation frequency) determined by the Critic module. Based on the actual amount of data processed and energy consumed in the current time frame, as well as the observed random data arrivals, the Queueing module updates the actual data queues \( \mathbf{Q}(t+1) \) and virtual energy queues \( \mathbf{Y}(t+1) \) of all devices at the beginning of the next time frame using the Lyapunov queue dynamics equation. The updated queue states, combined with the channel gain states of the next frame, form the new environmental input \( \xi^{t+1} \), thereby initiating the next control cycle.

Given the offloading decision \( x^t \) in a single frame subproblem, the set of wireless devices with \( x_i^t=1 \) (i.e., devices performing computation offloading) is denoted as \( \mathcal{M}_1^t \). The complementary set of local computing devices is denoted as \( \mathcal{M}_0^t \). For simplicity and without causing ambiguity, we omit the superscript \( t \) in the subsequent derivations in this section. Given the offloading decision \( \mathbf{x} \), the problem of finding the optimal continuous resource allocation variables can be rewritten as follows:

\begin{subequations}
\begin{align}
& \underset{\mathbf{f}, \mathbf{P}}{\text{maximize}} \quad
\sum_{j\in\mathcal{M}_0} \Bigg\{ \left(Q_j(t) + V c_j\right) \frac{f_j}{\phi} - \nu Y_j(t) \kappa \left(f_j\right)^3 \Bigg\} \nonumber\\
& \quad + \sum_{i\in\mathcal{M}_1} \Bigg\{ \left(Q_i(t) + V c_i\right) \frac{\eta_q W}{v_u} \log_2 \left(1 + \text{SINR}_i \right)\nonumber\\
&\quad - \nu Y_i(t) (P_i + P_{i,\text{qlc}}) \Bigg\} \nonumber\\[6pt]
&\text{subject to} \nonumber \\
&\frac{f_j}{\phi} \leq Q_j(t), \quad 0 \leq f_j \leq f_j^{\max}, \quad \forall j \in \mathcal{M}_0, \\
& \frac{\eta_q W}{v_u} \log_2 \left(1 + \text{SINR}_i\right) \leq Q_i(t),\\
& 0 \leq P_i \leq P_i^{\max}, \quad \forall i \in \mathcal{M}_1.
\end{align}
\end{subequations}





The above optimization problem can be decomposed and solved independently for the wireless devices in the sets $\mathcal{M}_1$ and $\mathcal{M}_0$, respectively. Specifically, for each device $j \in \mathcal{M}_0$ performing local computation, it only needs to solve the following independent optimization subproblem:  
$$\begin{aligned}
&\underset{f_j}{\text{maximize}} \quad \left(Q_j(t) + V c_j\right) f_j / \phi - \nu Y_j(t) \kappa f_j^3 \\
&\text{subject to } \quad 0 \leq f_j \leq \min\left\{ \phi Q_j(t), f_j^{\max} \right\}, \quad \forall j \in \mathcal{M}_0.
\end{aligned}$$

Since the objective function is strictly concave with respect to $f_j$, by taking its first order derivative $\frac{\left(Q_j(t) + Vc_j\right)}{\phi} - 3\nu Y_j(t)\kappa f_j^2 = 0$ and combining it with the boundary constraints, we can directly obtain its closed form optimal solution:  
$$f_j^* = \min\left\{ \sqrt{\frac{\left(Q_j(t) + V c_j\right)}{3\phi \kappa \nu Y_j(t)}}, \min\left\{ \phi Q_j(t), f_j^{\max} \right\} \right\}.$$

Intuitively, when the pending data queue $Q_j(t) + Vc_j$ of the $j$-th wireless device is larger, or the virtual energy queue $Y_j(t)$ is shorter, the device tends to compute at a higher local CPU frequency, and vice versa.

For each offloading device $j \in \mathcal{M}_1$, it needs to solve the following independent optimization subproblem:

\begin{subequations}
\begin{align}
&\underset{\mathbf{P}^t}{\text{maximize}} \quad
\sum_{i\in\mathcal{M}_1} \Bigg\{ \left(Q_i(t) + V c_i\right) \frac{\eta_q W}{v_u} \log_2 \left(1 + \text{SINR}_i^t \right) \notag \\
& \quad - \nu Y_i(t) (P_i^t + P_{i,\text{qlc}}) \Bigg\} \nonumber\\
&\text{subject to} \nonumber \\
& \frac{\eta_q W}{v_u} \log_2 \left(1 + \text{SINR}_i^t\right) \leq Q_i(t), \quad \forall i \in \mathcal{M}_1, \label{const2} \\
& 0 \leq P_i^t \leq P_i^{\max}, \quad \forall i \in \mathcal{M}_1. \label{const3}
\end{align}
\end{subequations}













\section{Power Allocation Algorithm Design and Comparison}\label{sec:Power Allocation}

\subsection{Heuristic Greedy Algorithm}\label{heuristic}
We arrange the channel gains in descending order:
\begin{equation}
h_1^t \geq h_2^t \geq \cdots \geq h_N^t.
\end{equation}

Adopting the most commonly used decoding order strategy, the signal with the strongest channel gain is decoded first. Let
\begin{equation}
\rho_i =\text{SINR}_i= \frac{P_i^t h_i^t}{N_0 + \sum_{j=i+1}^{N} P_j^t h_j^t}, \quad \forall i \in \mathcal{M}_1.
\end{equation}

Then the transmission rate is $r_i = \tfrac{\eta_q W}{v_u} \log_2\left(1 + \rho_i\right)$. We express $P_i^t$ in terms of $\rho_i$. The users in the set $\mathcal{M}_1$ are arranged in ascending order of their indices as $i_1, i_2, \ldots, i_M$ (where $i_1 < i_2 < \cdots < i_M$).

For user $i_M$, there is no interference from other $\mathcal{M}_1$ users:

\begin{equation}
\rho_{i_M} = \frac{P_{i_M}^t h_{i_M}^t}{N_0} \quad \Rightarrow \quad P_{i_M}^t = \frac{\rho_{i_M} N_0}{h_{i_M}^t}.
\end{equation}

Consider the second to last user $i_{M-1}$:

\begin{align*}
\rho_{i_{M-1}} &= \frac{P_{i_{M-1}}^t h_{i_{M-1}}^t}{N_0 + P_{i_M}^t h_{i_M}^t} \\
&= \frac{P_{i_{M-1}}^t h_{i_{M-1}}^t}{N_0 + \frac{\rho_{i_M} N_0}{h_{i_M}^t} \cdot h_{i_M}^t} \\
&= \frac{P_{i_{M-1}}^t h_{i_{M-1}}^t}{N_0 (1 + \rho_{i_M})}.
\end{align*}

Therefore,
\begin{equation}
P_{i_{M-1}}^t = \frac{\rho_{i_{M-1}} N_0 (1 + \rho_{i_M})}{h_{i_{M-1}}^t}.
\end{equation}.

For user $i_k$, the interference term is defined as:

\begin{equation}
I_{i_k} = N_0 + \sum_{j=k+1}^{M} P_{i_j}^t h_{i_j}^t
\label{eq:interference_def}.
\end{equation}.

We propose the following recursive relation.

\begin{proposition}
\label{prop:interference}
For any $k \in \{1, \dots, M\}$, the cumulative interference term for users in set $\mathcal{M}_1$ satisfies
\begin{equation}
    I_{i_k} = N_0 \prod_{j=k+1}^{M}(1 + \rho_{i_j}),
    \label{eq:interference_recursive}
\end{equation}
where the empty product (for $k = M$) is defined as $1$.
\end{proposition}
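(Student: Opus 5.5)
We argue by backward induction on $k$, starting from $k=M$ and descending to $k=1$. The underlying observation is that within the set $\mathcal{M}_1$, the interference seen by $i_k$ is exactly the noise plus the received powers of the later-decoded offloading users $i_{k+1},\dots,i_M$. Devices in $\mathcal{M}_0$ transmit with zero power, so they add nothing to the sums in \eqref{eq:interference_def} or in the SINR.

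The key identity we will use is a one-step recursion. Comparing the definition \eqref{eq:interference_def} at indices $k$ and $k+1$ gives
\begin{equation*}
I_{i_k} = I_{i_{k+1}} + P_{i_{k+1}}^t h_{i_{k+1}}^t .
\end{equation*}
The SINR of $i_{k+1}$ is $\rho_{i_{k+1}} = P_{i_{k+1}}^t h_{i_{k+1}}^t / I_{i_{k+1}}$, which can be rewritten as
\begin{equation*}
P_{i_{k+1}}^t h_{i_{k+1}}^t = \rho_{i_{k+1}} I_{i_{k+1}} .
\end{equation*}
Substituting this into the recursion yields
\begin{equation*}
I_{i_k} = (1+\rho_{i_{k+1}})\, I_{i_{k+1}} .
\end{equation*}

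\textbf{Base case.} For $k=M$ the sum in \eqref{eq:interference_def} is empty, so $I_{i_M}=N_0$. This matches \eqref{eq:interference_recursive} with the empty-product convention, and it agrees with the expression for $P_{i_M}^t$ derived just before the proposition.

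\textbf{Inductive step.} Assume \eqref{eq:interference_recursive} holds at index $k+1$. Multiplying $N_0\prod_{j=k+2}^{M}(1+\rho_{i_j})$ by $(1+\rho_{i_{k+1}})$ gives exactly $N_0\prod_{j=k+1}^{M}(1+\rho_{i_j})$. The case $k=M-1$ reproduces the earlier computation $I_{i_{M-1}}=N_0(1+\rho_{i_M})$, which serves as a sanity check.

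The main point needing care is that the paper's SINR denominator sums over all $j>i$ in the full ordering, not only over $\mathcal{M}_1$. We must therefore state that $P_j^t=0$ for $j\in\mathcal{M}_0$, as the paper notes at the optimum. With that, the SINR denominator of $i_{k+1}$ coincides with $I_{i_{k+1}}$. Once this identification is made, the rest is a routine telescoping product.

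As a corollary, the recursion also gives the closed-form power
\begin{equation*}
P_{i_k}^t = \frac{\rho_{i_k} I_{i_k}}{h_{i_k}^t},
\end{equation*}
which is the expression used in the subsequent decomposition into single-variable subproblems.
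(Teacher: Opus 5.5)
Your proof is correct and follows the paper's argument essentially step for step: backward induction from the base case $I_{i_M}=N_0$, with the inductive step driven by the identity $I_{i_k}=I_{i_{k+1}}+P_{i_{k+1}}^t h_{i_{k+1}}^t=(1+\rho_{i_{k+1}})I_{i_{k+1}}$. You also state explicitly that $P_j^t=0$ for $j\in\mathcal{M}_0$ is needed to identify the SINR denominator of $i_{k+1}$ with $I_{i_{k+1}}$, a step the paper uses only implicitly, so this is a useful clarification.
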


\begin{IEEEproof}
The proof proceeds by mathematical induction.

\textit{Base case for $k = M$:} According to the definition of the interference term~\eqref{eq:interference_def}, the user with the weakest channel gain $i_M$

is not interfered with by other users in set $\mathcal{M}_1$, therefore
\begin{equation*}
    I_{i_M} = N_0.
\end{equation*}
The empty product on the right-hand side of formula~ equals $1$, so $N_0 \times 1 = N_0$, and the base case holds.

\textit{Inductive hypothesis:} Assume the proposition holds for user $i_{k+1}$, i.e.,
\begin{equation*}
    I_{i_{k+1}} = N_0 \prod_{j=k+2}^{M}(1 + \rho_{i_j}).
\end{equation*}

\textit{Inductive step:} Consider the interference term for user $i_k$:
\begin{equation*}
    I_{i_k} = N_0 + \sum_{j=k+1}^{M} P_{i_j}^t h_{i_j}^t
             = N_0 + P_{i_{k+1}}^t h_{i_{k+1}}^t + \sum_{j=k+2}^{M} P_{i_j}^t h_{i_j}^t.
\end{equation*}
From the definition of $\rho_{i_{k+1}}$, we have $P_{i_{k+1}}^t h_{i_{k+1}}^t = \rho_{i_{k+1}} I_{i_{k+1}}$;
Simultaneously, from the definition of the interference term, we know $\sum_{j=k+2}^{M} P_{i_j}^t h_{i_j}^t = I_{i_{k+1}} - N_0$.
Substituting these two expressions yields
\begin{align*}
    I_{i_k} &= N_0 + \rho_{i_{k+1}} I_{i_{k+1}} + I_{i_{k+1}} - N_0 \\
             &= I_{i_{k+1}}(1 + \rho_{i_{k+1}}).
\end{align*}
Substituting the inductive hypothesis for $I_{i_{k+1}}$, we obtain
\begin{equation*}
    I_{i_k} = N_0 \prod_{j=k+2}^{M}(1+\rho_{i_j}) \cdot (1+\rho_{i_{k+1}})
             = N_0 \prod_{j=k+1}^{M}(1+\rho_{i_j}).
\end{equation*}
The proposition holds for $i_k$, completing the induction.
\end{IEEEproof}

From the definition of $\rho_{i_k}$:

\begin{equation}
\rho_{i_k} = \frac{P_{i_k}^t h_{i_k}^t}{I_{i_k}} \quad \Rightarrow \quad P_{i_k}^t h_{i_k}^t = \rho_{i_k} I_{i_k}.
\end{equation}

Substituting the expression for $I_{i_k}$:

\begin{equation}
P_{i_k}^t h_{i_k}^t = \rho_{i_k} N_0 \prod_{j=k+1}^{M} (1 + \rho_{i_j}).
\end{equation}

We finally obtain:

\begin{equation}
P_{i_k}^t = \frac{\rho_{i_k} N_0}{h_{i_k}^t} \prod_{j=k+1}^{M} (1 + \rho_{i_j}).
\end{equation}

Rate constraint

\begin{align*}
\frac{\eta_qW}{v_u} \log_2(1 + \rho_i) \leq Q_i(t) &\Rightarrow \log_2(1 + \rho_i) \leq \frac{v_u Q_i(t)}{\eta_qW} \\
&\Rightarrow 1 + \rho_i \leq 2^{\frac{v_u Q_i(t)}{\eta_qW}} \\
&\Rightarrow \rho_i \leq 2^{\frac{v_u Q_i(t)}{\eta_qW}} - 1 = \Gamma_i.
\end{align*}

Power constraint:

\begin{align}
0 \leq P_i^t \leq P_i^{\max} &\Rightarrow 0 \leq \frac{\rho_i N_0}{h_i^t} \prod_{j=k+1}^{M} (1 + \rho_{i_j}) \leq P_i^{\max}.
\end{align}

Since $\rho_i \geq 0$, the left inequality naturally holds. The right inequality yields:

\begin{equation}
\rho_i \leq \frac{h_i^t P_i^{\max}}{N_0 \prod_{j=k+1}^{M} (1 + \rho_{i_j})} = U_i.
\end{equation}

Let $a_i = \eta_q\left(Q_i(t) + V c_i\right)$ and $b_i = \nu Y_i(t)$.

The objective function term becomes:

\begin{align*}
&a_i \frac{W}{v_u} \log_2(1 + \rho_i) - b_i P_i^t\\
&= a_i \frac{W}{v_u} \log_2(1 + \rho_i) - b_i \frac{\rho_i N_0}{h_i^t} \prod_{j=k+1}^{M} (1 + \rho_{i_j}).
\end{align*}

Ignoring the constant $P_{i,qlc}$ since it does not affect the optimization, let

\begin{equation}
C_{k+1} = \prod_{j=k+1}^{M} (1 + \rho_{i_j}).
\end{equation}

Then the objective term for user $i_k$ is:

\begin{equation}
a_{i_k} \frac{W}{v_u} \log_2(1 + \rho_{i_k}) - b_{i_k} \frac{\rho_{i_k} N_0}{h_{i_k}^t} C_{k+1}.
\end{equation}

Due to the non-convexity and strong coupling of the global power allocation problem, increasing the power of user $i_k$ will increase the interference experienced by all users with better channel conditions. To obtain a computationally efficient solution, we propose a greedy backward induction strategy. In this approach, we decompose the global optimization into a series of local subproblems. For each user $i_k$, we maximize its contribution to the objective function while satisfying its specific constraints, ignoring the marginal interference cost imposed on previous users. The problem is solved in reverse order, starting from the last decoded user ($i_M$) and proceeding to the first user ($i_1$). For a general user $i_k$, assuming that the power allocation variables for subsequent users $\rho_{i_{k+1}}, \dots, \rho_{i_M}$ have been determined, the cumulative interference term $C_{k+1}$ is fixed:

\begin{equation}
C_{k+1} = \prod_{j=k+1}^{M} (1 + \rho_{i_j}^*).
\end{equation}
For the last user $i_M$, we define $C_{M+1} = 1$. The local optimization problem for user $i_k$ can be formulated as:

\begin{align}
\underset{\rho_{i_k}}{\text{maximize}} \quad & \mathcal{F}_{k}(\rho_{i_k}) = a_{i_k} \frac{W}{v_u} \log_2(1 + \rho_{i_k}) - b_{i_k} \frac{\rho_{i_k} N_0}{h_{i_k}^t} C_{k+1} \nonumber \\
\text{subject to} \quad & 0 \leq \rho_{i_k} \leq \min\left( \Gamma_{i_k}, \frac{h_{i_k}^t P_{i_k}^{\max}}{N_0 C_{k+1}} \right).
\end{align}

Analyzing the convexity of the local objective function $\mathcal{F}_{k}(\rho_{i_k})$. Taking the first order derivative with respect to $\rho_{i_k}$:
\begin{align}
\frac{\partial \mathcal{F}_{k}}{\partial \rho_{i_k}} &= a_{i_k} \frac{W}{v_u} \cdot \frac{1}{\ln 2} \cdot \frac{1}{1+\rho_{i_k}} - b_{i_k} \frac{N_0 C_{k+1}}{h_{i_k}^t} \nonumber \\
&= \frac{a_{i_k} W}{v_u \ln 2 (1+\rho_{i_k})} - \frac{b_{i_k} N_0 C_{k+1}}{h_{i_k}^t}.
\end{align}

Setting the derivative to zero $\frac{\partial \mathcal{F}_{k}}{\partial \rho_{i_k}} = 0$, we obtain the unconstrained stationary point $\rho_{i_k}^0$:

\begin{equation}
\frac{a_{i_k} W}{v_u \ln 2 (1+\rho_{i_k}^0)} = \frac{b_{i_k} N_0 C_{k+1}}{h_{i_k}^t}.
\end{equation}

Solving for $\rho_{i_k}^0$:
\begin{equation}
\rho_{i_k}^0 = \frac{a_{i_k} W h_{i_k}^t}{v_u \ln 2 \cdot b_{i_k} N_0 C_{k+1}} - 1.
\end{equation}

Since the second order derivative $\frac{\partial^2 \mathcal{F}_{k}}{\partial \rho_{i_k}^2} = - \frac{a_{i_k} W}{v_u \ln 2 (1+\rho_{i_k})^2} < 0$, the local objective function is strictly concave. Therefore, the optimal solution to the local subproblem lies at a stationary point or on the boundary of the feasible set. The closed-form heuristic solution for user $i_k$ is given by:
\begin{equation}
\rho_{i_k}^* = \min\left( \max(0, \rho_{i_k}^0), \min\left( \Gamma_{i_k}, \frac{h_{i_k}^t P_{i_k}^{\max}}{N_0 C_{k+1}} \right) \right).
\end{equation}

After determining $\rho_{i_k}^*$, the cumulative interference term is updated for the next iteration (user $i_{k-1}$):
\begin{equation}
C_k = C_{k+1} (1 + \rho_{i_k}^*) = \prod_{j=k}^{M} (1 + \rho_{i_j}^*).
\end{equation}

For the special case where $b_{i_k} = 0$, if $Y_i(t) = 0$, then $b_{i_k} = 0$, and the objective function becomes:
\begin{equation}
\text{Term}_{i_k} = a_{i_k} \frac{W}{v_u} \log_2(1 + \rho_{i_k}).
\end{equation}

This is a monotonically increasing function of $\rho_{i_k}$, so the optimal solution takes the maximum value:
\begin{equation}
\rho_{i_k}^* = \min\left( \Gamma_{i_k}, \frac{h_{i_k}^t P_{i_k}^{\max}}{N_0 C_{k+1}} \right).
\end{equation}

\subsection{SCA Benchmark Algorithm}

To evaluate the performance of the low-complexity heuristic algorithm proposed in Section \ref{heuristic}, we use an iterative SCA-based \cite{razaviyayn2013unified} as a performance benchmark. The SCA algorithm guarantees convergence to a KKT point of the original nonconvex problem by iteratively solving a series of convex approximate subproblems. Consider the optimization problem in Section \ref{sec:AC}. To apply the SCA method, we first rewrite the objective function as a Difference of Convex (DC) functions.

Let $\alpha_i = (Q_i(t) + Vc_i)\frac{\eta_qW}{v_u}$ and $\beta_i = \nu Y_i(t)$. Ignoring the constant term $P_{i,\text{qlc}}$ (which does not affect the optimization), the objective function can be expressed as:

\begin{align}
\mathcal{F}(\mathbf{P}^t) 
&= \sum_{i \in \mathcal{M}_1} \left[ \alpha_i \log_2(1 + \text{SINR}_i^t) - \beta_i P_i^t \right] \nonumber \\
&= \sum_{i \in \mathcal{M}_1} \alpha_i \left[ \underbrace{\log_2\left(N_0 + \sum_{j=i}^{N} P_j^t h_j^t\right)}_{g_i(\mathbf{P}^t)} \right. \nonumber \\
&\quad \left. - \underbrace{\log_2\left(N_0 + \sum_{j=i+1}^{N} P_j^t h_j^t\right)}_{l_i(\mathbf{P}^t)} \right] 
- \sum_{i \in \mathcal{M}_1} \beta_i P_i^t.
\label{eq:dc_decomposition}
\end{align}

Where $g_i(\mathbf{P}^t)$ and $l_i(\mathbf{P}^t)$ are both concave functions with respect to $\mathbf{P}^t$ (because the linear function inside the logarithm is concave, and the logarithmic function is an increasing concave function). Therefore, the original maximization problem is equivalent to maximizing a "concave function minus a concave function", which is a typical DC (Difference of Convex) programming problem. Meanwhile, the rate constraint (\ref{const2}) can be equivalently transformed into a linear inequality:
\begin{equation}
P_i^t h_i^t \leq \Gamma_i \left( N_0 + \sum_{j=i+1}^{N} P_j^t h_j^t \right), \quad \forall i \in \mathcal{M}_1
\label{eq:linear_constraint}.\,
\end{equation}
where $\Gamma_i = 2^{\frac{v_u Q_i(t)}{\eta_qW}} - 1$. Constraint (\ref{eq:linear_constraint}) is linear with respect to $\mathbf{P}^t$ and inherently satisfies convexity. The non convexity primarily originates from the subtractive term $-\alpha_i l_i(\mathbf{P}^t)$ in the objective function (\ref{eq:dc_decomposition}). Since $l_i(\mathbf{P}^t)$ is a concave function, its first-order Taylor expansion provides a global upper bound for this function.

At the $k$-th iteration, given a local point $\mathbf{P}^{(k)} = [P_1^{(k)}, \dots, P_N^{(k)}]$, we perform a first order Taylor expansion on $l_i(\mathbf{P}^t)$:
\begin{align}
l_i(\mathbf{P}^t) &\leq l_i(\mathbf{P}^{(k)}) + \nabla l_i(\mathbf{P}^{(k)})^T (\mathbf{P}^t - \mathbf{P}^{(k)}) \nonumber \\
&= \tilde{l}_i(\mathbf{P}^t; \mathbf{P}^{(k)}),
\label{eq:taylor_approximation}
\end{align}

where the $j$-th component of the gradient $\nabla l_i(\mathbf{P}^{(k)})$ is:
\begin{equation}
\frac{\partial l_i}{\partial P_j^t}\bigg|_{\mathbf{P}^{(k)}} = 
\begin{cases} 
\frac{h_j^t}{\left(N_0 + \sum_{m=i+1}^{N} P_m^{(k)} h_m^t\right) \ln 2}, & \text{if } j > i \\
0, & \text{otherwise}
\end{cases}.
\label{eq:gradient}
\end{equation}

By replacing $l_i(\mathbf{P}^t)$ with its linear upper bound $\tilde{l}_i(\mathbf{P}^t; \mathbf{P}^{(k)})$, we obtain the convex approximate subproblem at the $k$-th iteration:
\begin{subequations}
\begin{align}
\underset{\mathbf{P}^t}{\text{maximize}} \quad & \sum_{i \in \mathcal{M}_1} \left[ \alpha_i \left( g_i(\mathbf{P}^t) - \tilde{l}_i(\mathbf{P}^t; \mathbf{P}^{(k)}) \right) - \beta_i P_i^t \right] \\
\text{subject to} \quad & P_i^t h_i^t - \Gamma_i \sum_{j=i+1}^{N} P_j^t h_j^t \leq \Gamma_i N_0, \quad \forall i \in \mathcal{M}_1, \label{eq:sca_sub_const1} \\
& 0 \leq P_i^t \leq P_i^{\max}, \quad \forall i \in \mathcal{M}_1. \label{eq:sca_sub_const2}
\end{align}
\end{subequations}

Subproblem (\ref{eq:sca_sub_const1})-(\ref{eq:sca_sub_const2}) is a standard convex optimization problem \cite{bertsekas1997nonlinear}, where the objective function is concave (a concave function minus a linear function remains concave), and the constraints are all linear. This problem can be efficiently solved using interior-point methods . The detailed algorithm is shown in Algorithm \ref{alg:SCA_benchmark}.


\begin{algorithm}[H]
\caption{SCA-based Power Allocation Benchmark Algorithm}\label{alg:SCA_benchmark}
\begin{algorithmic}[1]
\REQUIRE $\mathcal{M}_1$, $Q_i(t)$, $Y_i(t)$, $h_i^t$, $V$, $\nu$, $W$, $v_u$, $c_i$, $N_0$, $P_i^{\max}$, $\eta_q$
\ENSURE Optimal power allocation $\mathbf{P}^{t*}$
\STATE \textbf{Initialization:} iteration index $k \gets 0$, tolerance $\epsilon \gets 10^{-4}$, maximum number of iterations $K_{\max} \gets 100$
\STATE Generate an initial feasible point $\mathbf{P}^{(0)}$ (e.g., uniform power allocation)
\REPEAT
\STATE Compute $\Gamma_i = 2^{\frac{v_u Q_i(t)}{\eta_qW}} - 1$, $\forall i \in \mathcal{M}_1$
\STATE Compute $\alpha_i = (Q_i(t) + Vc_i)\frac{\eta_qW}{v_u}$, $\beta_i = \nu Y_i(t)$, $\forall i \in \mathcal{M}_1$
\STATE Compute the gradient $\nabla l_i(\mathbf{P}^{(k)})$ according to (\ref{eq:gradient})
\STATE Construct the convex subproblem (\ref{eq:sca_sub_const1})-(\ref{eq:sca_sub_const2})
\STATE Solve the subproblem using a convex optimization solver (e.g., CVX) to obtain $\mathbf{P}^{(k+1)}$
\STATE Compute the change in objective function value: $\Delta = |\mathcal{F}(\mathbf{P}^{(k+1)}) - \mathcal{F}(\mathbf{P}^{(k)})|$
\STATE $k \gets k + 1$
\UNTIL{$\Delta \le \epsilon$ or $k \ge K_{\max}$}
\STATE $\mathbf{P}^{t*} \gets \mathbf{P}^{(k)}$
\STATE \textbf{Return} $\mathbf{P}^{t*}$
\end{algorithmic}
\end{algorithm}






\subsection{Computational Complexity Analysis}

Let the cardinality of the set $\mathcal{M}_1$ of devices performing computation offloading be $M = |\mathcal{M}_1| \leq N$.
The main body of the algorithm consists of two sequential traversal loops: a backward iteration phase (from user $i_M$ to $i_1$) and a forward power computation phase (from user $i_M$ to $i_1$).
Each loop requires $M$ iterations, and each iteration involves only a constant number of floating-point operations. Computing the stationary point $\rho_{i_k}^0$, updating the cumulative interference term $C_k$, etc.
Therefore, for a given offloading decision $\mathbf{x}^t$, the computational complexity of Algorithm~1 is $\mathcal{O}(N)$.

Algorithm~\ref{alg:SCA_benchmark} approximates the KKT point of the original nonconvex problem by iteratively solving convex approximate subproblems.
In each SCA iteration, a convex quadratic programming (QP) problem with $|\mathcal{M}_1|$ power variables as optimization variables needs to be constructed and solved, and the number of constraints is $\mathcal{O}(N)$.
The per iteration complexity of solving this QP subproblem using interior point methods is $\mathcal{O}(N^{3.5})$ .
Assuming the number of iterations required for SCA convergence is $K_{\mathrm{iter}}$ (bounded above by $K_{\max}$), the total complexity of Algorithm~2 is $\mathcal{O}(K_{\mathrm{iter}} \cdot N^{3.5})$.
Since $K_{\mathrm{iter}}$ is typically on the order of tens, the computational cost of SCA is significantly higher than that of Algorithm~1, especially when the number of devices $N$ is large.

\section{EVALUATION}\label{sec:evaluation}

\subsection{Parameter Settings}

Simulations are implemented in Python (PyTorch for DNN, CVXPY for SCA). We consider $N=10$ WDs uniformly distributed between 120\,m and 255\,m from the ES. Channel gains follow a Rician fading model with a standard path loss ($d_e=3$). Task arrivals are exponentially distributed with mean $\lambda_i = 3$ Mbps. The DNN actor uses a fully connected architecture $[3N, 256, 128, N]$. Priority weights are set as $c_i=1.5$ for odd-indexed devices and $c_i=1$ otherwise. Other key parameters are listed in Table~\ref{tab:sim_params}.

\begin{table}[htbp]
  \centering
  \caption{Simulation Parameters.}
  \label{tab:sim_params}
  \begin{tabular}{cccc}
    \toprule
    \textbf{Parameter} & \textbf{Value} & \textbf{Parameter} & \textbf{Value} \\
    \midrule
    $W$ & 2 MHz & $f_{\text{max}}$ & 0.3 GHz \\
    $v_{\text{u}}$ & 1.1 & $P_{\text{max}}$ & 0.1 W \\
    memory size $q$ & 1024 & $\phi$ & 100 cycles/bit \\
    $\Delta T$ & 20 & $\Delta M$ & 32 \\
    $\gamma_i$ & 0.08 W & $\nu$ & 60 \\
    $V$ & 20 & $\lambda_i$ & 3 Mbps \\
    \bottomrule
  \end{tabular}
\end{table}












\subsection{Basic Performance Comparison}

\begin{figure}[htbp]
    \centering
    \includegraphics[width=0.45\textwidth]{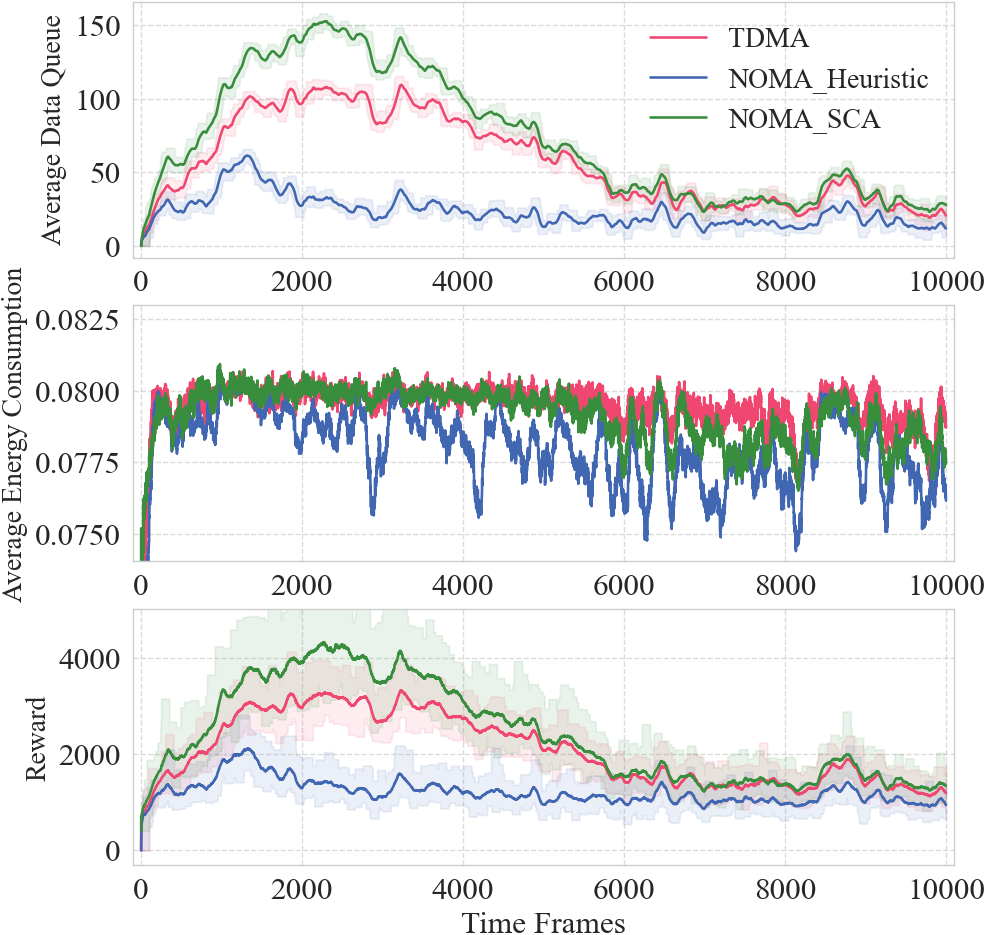} 
    \caption{Convergence performance comparison of different schemes when $\lambda_i$ is 3. From top to bottom: data queue length; system power consumption; system reward.}
    \label{fig:single}
\end{figure}


As shown in Figure \ref{fig:single}, over 10,000 time frames, we compare three algorithm schemes: the TDMA scheme, the $\text{NOMA\_Heuristic}$ scheme, and the $\text{NOMA\_SCA}$ scheme

\textbf{Data Queue Stability}: The $\text{NOMA\_Heuristic}$ scheme demonstrates the best queue stability, with the lowest average data queue length and the fastest convergence to a very low steady state value. The $\text{NOMA\_SCA}$ scheme, in its initial phase, slows down the DNN convergence due to the enormous computational overhead of each iteration, leading to a peak in queue backlog (approximately 150 bits). However, after sufficient training, it can fall back to a level comparable to TDMA.

\textbf{Long-term Energy Constraint}: The average energy consumption of all three schemes closely adheres to the preset threshold of $\gamma_i = 0.08$ W, which validates the effectiveness of the proposed Lyapunov-based virtual energy queue mechanism. Among them, the $\text{NOMA\_Heuristic}$ scheme exhibits slightly lower energy consumption than the other two schemes due to its conservative power strategy that prioritizes emptying the queues.

\textbf{System Reward Value}: SCA indeed outperforms Heuristic in terms of per-frame reward, which is consistent with SCA converging to a KKT point. However, the core advantage of Heuristic lies not in maximizing the per-frame reward, but in its O(N) low latency, which enables the DNN to converge faster and achieve a more stable queue state over the long term.

\subsection{Analysis of System Load Impact}
This subsection evaluates the impact of the task arrival rate $\lambda_i$ on system performance by comparing and analyzing the three algorithms TDMA, NOMA, and SCA under two scenarios: low load with $\lambda \in [1.9, 2.9]$ as shown in Figure \ref{fig:single0}, and high load with $\lambda \in [2.9, 3.9]$ as shown in Figure \ref{fig:single1}.

\textbf{Average Data Queue Length} exhibits a nonlinear increasing trend as the arrival rate $\lambda_i$ grows:
During the low-load phase, all three algorithms can effectively handle arriving tasks, with queue lengths remaining low at below 40 Mb. Among them, the NOMA algorithm benefits from the high spectral efficiency of non-orthogonal multiple access NOMA, consistently maintaining the lowest queue length.
In the high-load phase when $\lambda_i$ exceeds 3.1, the queues of all three algorithms exhibit uncontrolled growth.

\textbf{Average Energy Consumption}: The energy consumption of all algorithms increases with the load, eventually approaching the preset energy threshold of 0.08 watts. In the low load region, at the same load, NOMA's average energy consumption is consistently lower than that of TDMA and SCA. This is because NOMA allows users to transmit in parallel on the same subchannel, reducing the energy per bit of data. When the load $\lambda_i$ exceeds 3.2, the energy consumption curves of all three algorithms surpass the threshold.

\textbf{Average Weighted Sum Computation Rate} reflects the upper limit of system processing capability:
In the low load region, NOMA\_SCA performs very well, decreasing slowly from 40 Mbps, while TDMA and NOMA\_Heuristic are almost identical, increasing slowly from 25 Mbps. In the high load region, NOMA can maintain around 40 Mbps, significantly outperforming the other two algorithms. The simulation results consistently demonstrate that the NOMA algorithm achieves lower queuing delay, higher computational throughput, and better energy efficiency across different system loads, demonstrating its excellent robustness in dynamic I environments.






\begin{figure}[htbp]
    \centering
    \includegraphics[width=0.45\textwidth]{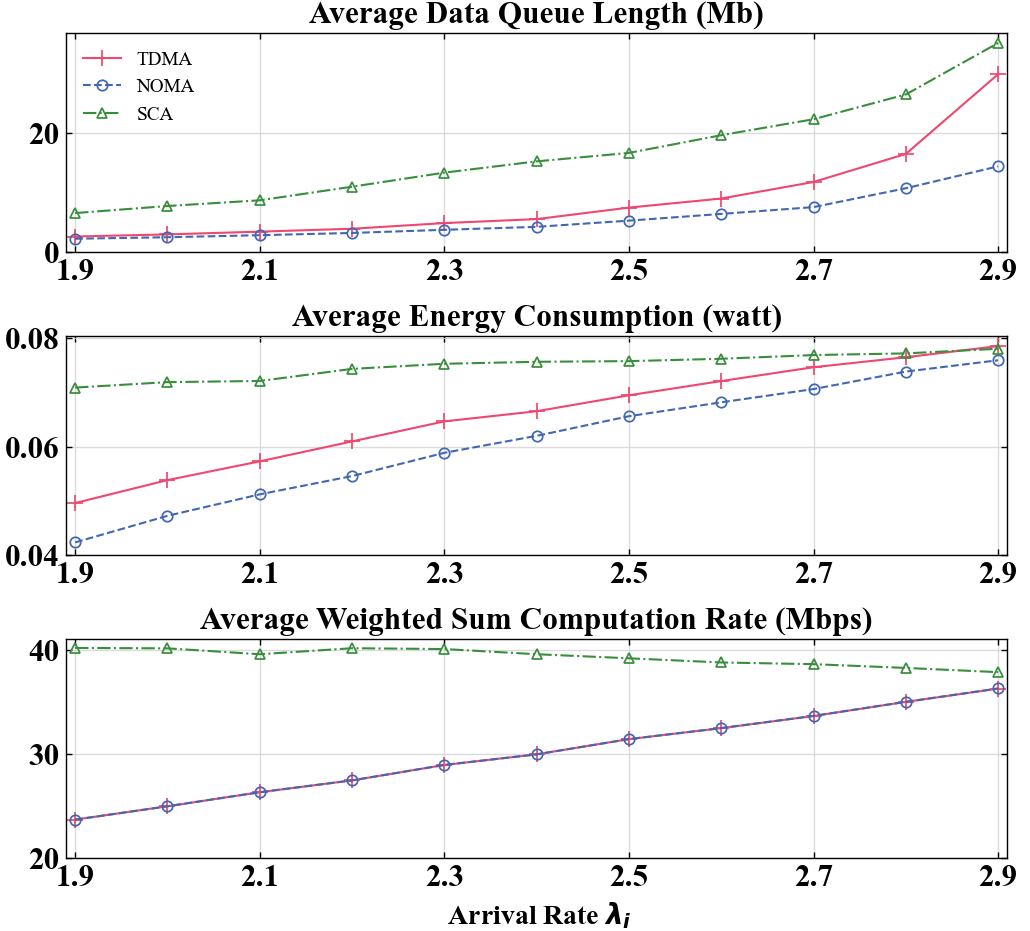} 
    \caption{Performance comparison under different $\lambda_i$ values at low load.}
    \label{fig:single0}
\end{figure}

\begin{figure}[htbp]
    \centering
    \includegraphics[width=0.45\textwidth]{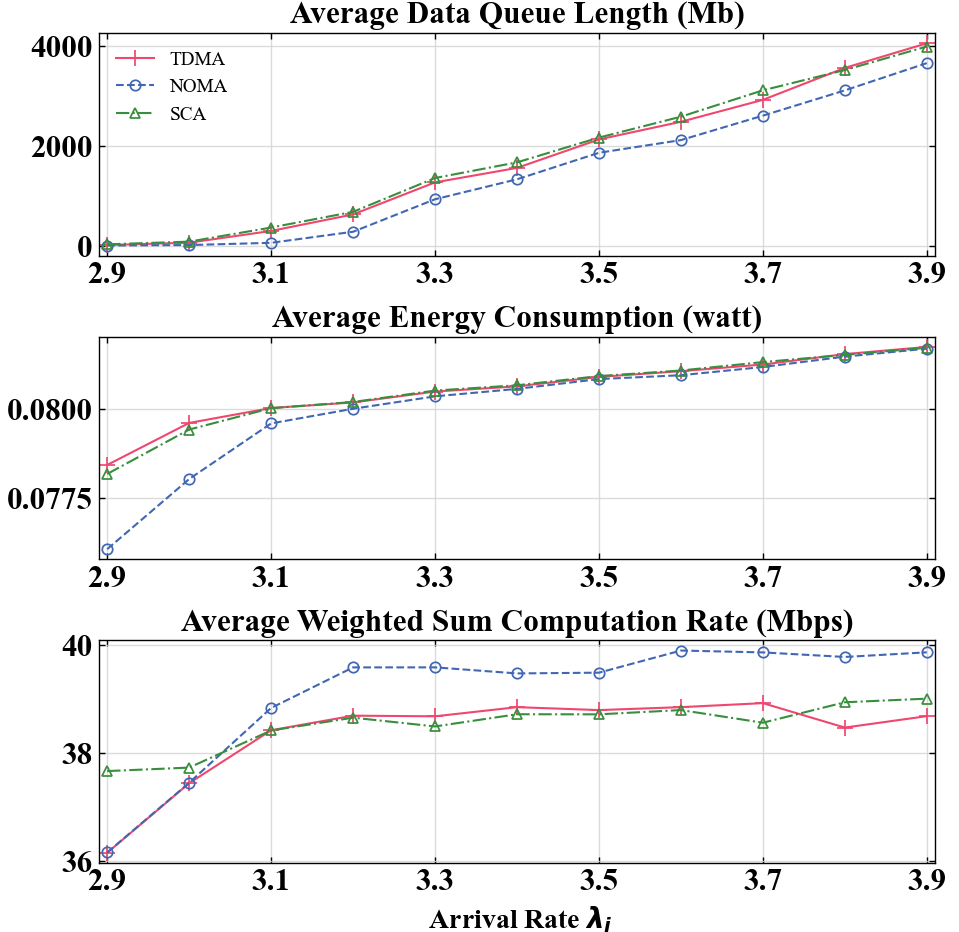} 
    \caption{Performance comparison under different $\lambda_i$ values at high load.}
    \label{fig:single1}
\end{figure}

\subsection{Algorithm Execution Time and Computational Complexity}

In Figure \ref{fig:single2}, with the number of users $N$ increasing from 10 to 35, each configuration is independently repeated 30 times, and the average execution time and standard deviation of each scheme are recorded. The logarithmic scale plot shows the trend in execution time for the three schemes as the number of users increases, with the shaded areas representing the corresponding one-standard-deviation confidence intervals.
The NOMA\_Heuristic scheme achieves the lowest execution time across all tested scales and exhibits the most gradual growth trend. When $N$ increases from 10 to 35, its execution time only grows from approximately 75 $\mu$s to approximately 175 $\mu$s, an increase of about 2.3 times, which is highly consistent with the theoretical $\mathcal{O}(N)$ complexity prediction. For the NOMA\_SCA scheme, when $N = 10$, it is 2534 $\mu$s, and when $N = 35$, it exceeds 8000 $\mu$s, with an overall increase of about 3.2 times, consistent with the theoretical $\mathcal{O}(K_{\text{iter}} \cdot N^{3.5})$ complexity. At $N = 35$, the execution time of NOMA SCA is approximately 46 times that of NOMA\_Heuristic, indicating that as the network scale expands, iterative approximation methods based on convex optimization solvers will face significant latency bottlenecks, making it challenging to meet the real-time requirements for millisecond-level online decision making in fast fading channels. Specific data are shown in Table \ref{tab:comparison}.

\begin{figure}[htbp]
    \centering
    \includegraphics[width=0.45\textwidth]{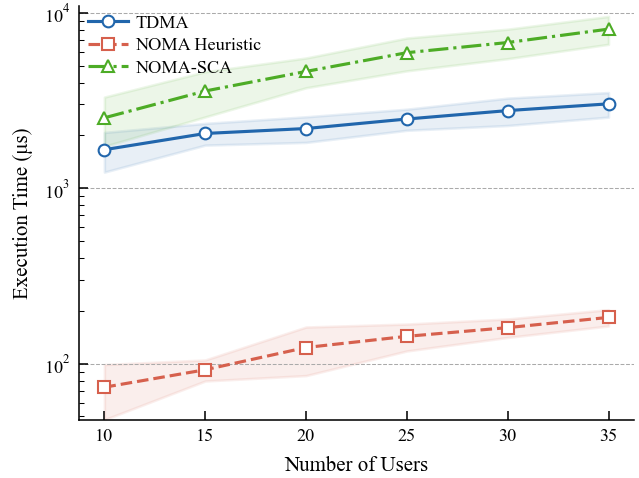} 
    \caption{Comparison of execution time for three algorithms.}
    \label{fig:single2}
\end{figure}

\begin{table}[htbp]
\centering
\caption{Time comparison of three algorithms (microseconds).}
\label{tab:comparison}
\begin{tabular}{l r r r}
\toprule
N & TDMA & NOMA\_Heuristic & NOMA\_SCA \\
\midrule
10 & 1618.26 & 73.86 & 2343.63 \\
15 & 1831.25 & 100.22 & 3291.03 \\
20 & 1994.97 & 116.20 & 4206.43 \\
25 & 2234.49 & 135.66 & 5348.66 \\
30 & 2479.60 & 163.62 & 6031.52 \\
35 & 2692.38 & 179.22 & 7210.54 \\
\bottomrule
\end{tabular}
\end{table}

\subsection{Impact of the Quantum Security Module on Offloading Decisions and System Power Consumption}

In Figure \ref{fig:single3}, with a fixed task arrival rate $\lambda_i = 3$ Mbps, $P_{i,qlc}$ is gradually increased from 0.00 W to 0.10 W to test the average offloading ratio and power consumption after the convergence of the NOMA\_Heuristic scheme. The test results show that as $P_{i,qlc}$ increases, the offloading ratio decreases strictly monotonically, dropping from 25.6\% at 0 W to 16.9\% at 0.10 W, a cumulative decrease of 8.7 percentage points, corresponding to a relative reduction of approximately 34\%. This phenomenon arises because $P_{i,qlc}$, as the fixed circuit power consumption in the offloading decision, directly increases the energy penalty term $\nu Y_i(t) e_i^t$ in the Lyapunov DPP function. This raises the cost of offloading energy consumption, thereby prompting the DNN action network to favor local computation. The above results indicate that the fixed power consumption of the quantum security module has a significant impact on the offloading strategy, and the system design phase requires strategy calibration based on $P_{i,qlc}$.

\begin{figure}[htbp]
    \centering
    \includegraphics[width=0.45\textwidth]{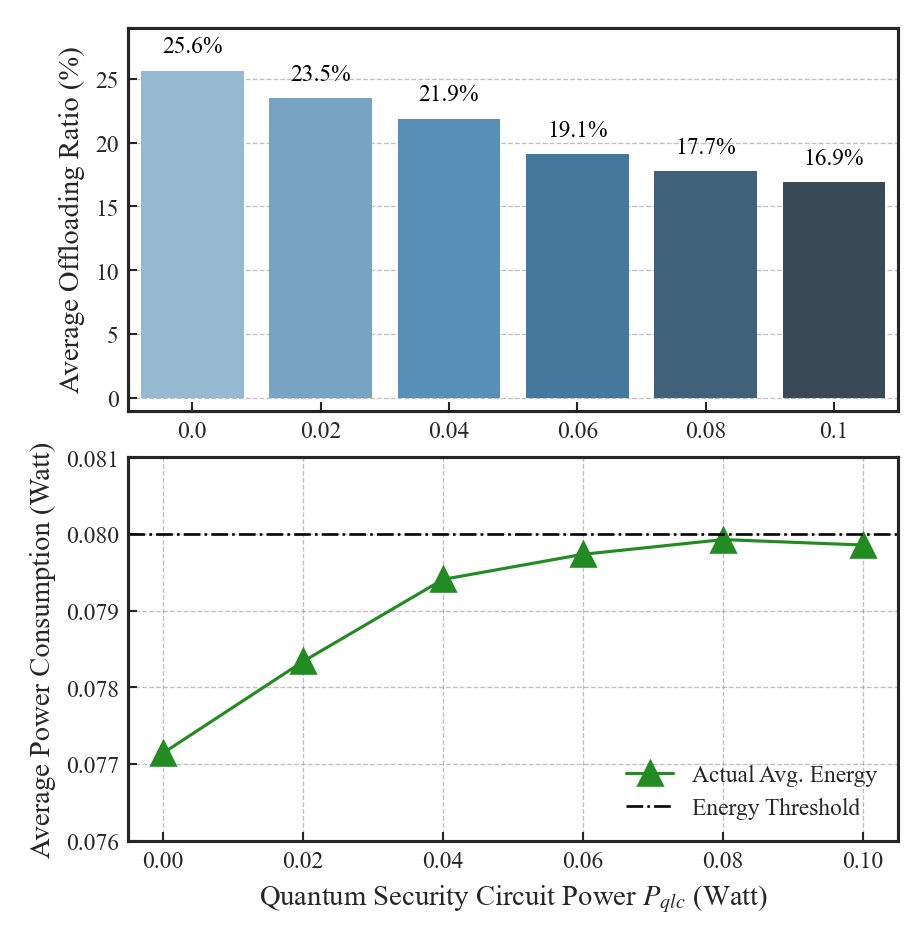} 
    \caption{Impact of PQC power consumption on offloading ratio and system power consumption.}
    \label{fig:single3}
\end{figure}
As shown in the lower part of Figure \ref{fig:single3}, the average system power consumption increases monotonically with $P_{i,qlc}$, rising from approximately 0.0771 W. It approaches the energy consumption threshold $\gamma_i = 0.08$ W when $P_{qlc} \approx 0.08$ W, after which the curve tends to saturate. Throughout the tested range, the average power consumption never exceeds $\gamma_i$, verifying that the Lyapunov virtual energy queue mechanism can effectively maintain the long-term power constraint even under the disturbance of additional fixed power consumption introduced by the quantum security module.

\subsection{Ablation Experiments}

Through systematic ablation experiments, we quantitatively analyze the influence patterns of two key hyperparameters in the Lyapunov framework, the trade-off $V$ and the virtual energy queue scaling factor $\nu$, on system performance. 

\textbf{Analysis of the Impact of Parameter $V$}: $V$ is the core trade-off in the DPP expression that balances queue delay and computation rate. Its value determines the system's emphasis between queue stability and maximizing computational throughput. Fixing $\nu=60$, $\eta_q=1.0$, $P_{\text{qlc}}=0$, comparative experiments are conducted for $V \in \{1, 50, 100, 500, 1000\}$, with results shown in Figure \ref{fig:ab_nu} and Table \ref{tab:effect_nu}.

Regarding the average data queue length, an increase in $V$ leads to significant queue backlog: when $V \leq 100$, the data queue stabilizes at $\leq 25.6\text{ Mb}$, converging within about 3000 frames; however, when $V=500$ and $V=1000$, the queue length rises to 71.9 Mb and 150.5 Mb, respectively, and does not fully converge within 10000 frames, indicating that excessively large $V$ impairs queue stability. This aligns with DPP theory: the larger the $V$, the more the system tends to maximize the computation rate while reducing the queue penalty, leading to queue growth.

Regarding the average energy queue length, an increase in $V$ causes it to grow synchronously, from 4.84 mJ at $V=1$ to 174.4 mJ at $V=1000$, reflecting that under high $V$, the system tolerates higher energy consumption fluctuations to enhance the rate. However, the average power consumption for all configurations converges near the energy consumption threshold $\gamma_i=0.08$ W, verifying that the virtual energy queue mechanism effectively maintains the long-term power constraint.

Regarding the weighted computation rate, the impact of $V$ is limited: from $V=1$ to 1000, the weighted rate only increases slightly from 37.35 Mbps to 37.65 Mbps, an increase of less than 1\%. Further increasing $V$ yields limited rate gains but significantly worsens queue delay. In summary, this paper selects $V=20$ as the default configuration to balance queue stability and computational efficiency.

\begin{figure}
    \centering
    \includegraphics[width=0.45\textwidth]{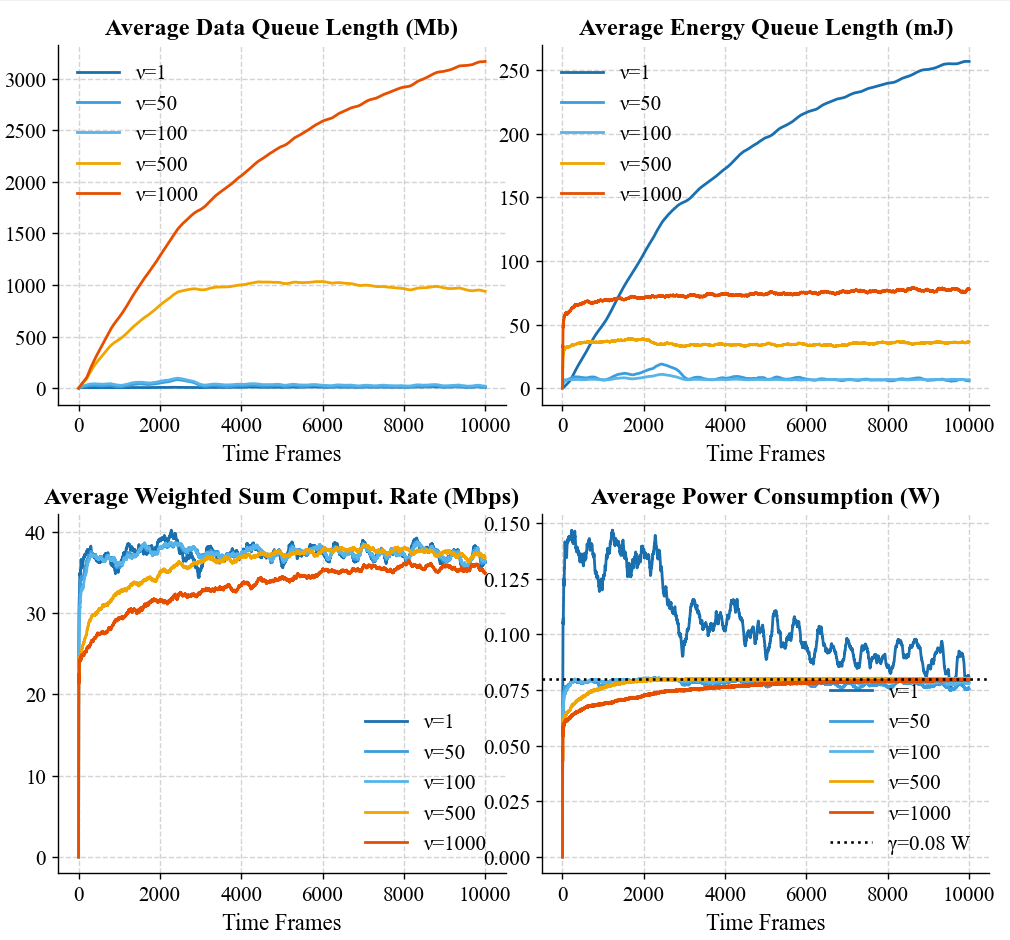} 
    \caption{Evolution curves of key system performance metrics over time frames under different weight parameters $\nu$.}
    \label{fig:ab_nu}
\end{figure}

\begin{table}[htbp]
    \centering
    \caption{Analysis of the impact of control parameter V on system queue length, computation rate, and power consumption.}
    \label{tab:effect_V}
    \begin{tabular}{c c c c c}
        \toprule
        $V$ & \makecell{Data Queue \\ (Mb)} & \makecell{Energy Queue \\ (mJ)} & \makecell{Weighted Rate \\ (Mbps)} & \makecell{Power \\ (W)} \\
        \midrule
        1    & 24.632  & 4.840   & 37.35 & 0.0793 \\
        50   & 18.557  & 10.786  & 37.35 & 0.0779 \\
        100  & 25.585  & 19.218  & 37.36 & 0.0790 \\
        500  & 71.915  & 85.294  & 37.49 & 0.0799 \\
        1000 & 150.488 & 174.369 & 37.65 & 0.0799 \\
        \bottomrule
    \end{tabular}
\end{table}

The parameter $\nu$ scales the energy queue update, determining the penalty strength for energy constraint violations in the DPP objective. With $V=20$, experiments were conducted for $\nu \in \{1, 50, 100, 500, 1000\}$, with results shown in Figure \ref{fig:ab_V} and Table \ref{tab:effect_V}.

For $\nu=1$, the energy penalty is too weak. The virtual energy queue grows monotonically without converging over 10,000 frames, reaching a steady-state mean of 234.1 mJ. The average power consumption is 0.0919 W, exceeding the threshold $\gamma_i=0.08$ W, indicating a "soft failure" where energy feasibility is not guaranteed.

At $\nu=50$ and $100$, the system performs optimally. The energy queue converges rapidly with steady-state means of 6.86 mJ and 6.68 mJ. The data queue remains stable, average power meets the threshold, and the weighted computation rate stays high at 37.35 Mbps. Here, the penalty effectively balances energy constraints and throughput.

When $\nu$ increases to $500$ and $1000$, an excessive energy penalty becomes counterproductive. The data queue experiences a severe backlog and grows unbounded, exceeding 10,000 frames. The DPP objective, dominated by the energy term, forces the system to suppress transmission power overly, drastically reducing the offloading rate and destabilizing the queue. At $\nu=1000$, the weighted computation rate drops to 35.31 Mbps,a 5.5\% decrease from the optimal setting.

These results reveal an optimal window for $\nu$: too small fails to enforce energy constraints, too large compromises queue stability and throughput. Balancing data queue stability, energy feasibility, and system throughput, this paper adopts $\nu=60$ as the default configuration.

\begin{figure}
    \centering
    \includegraphics[width=0.45\textwidth]{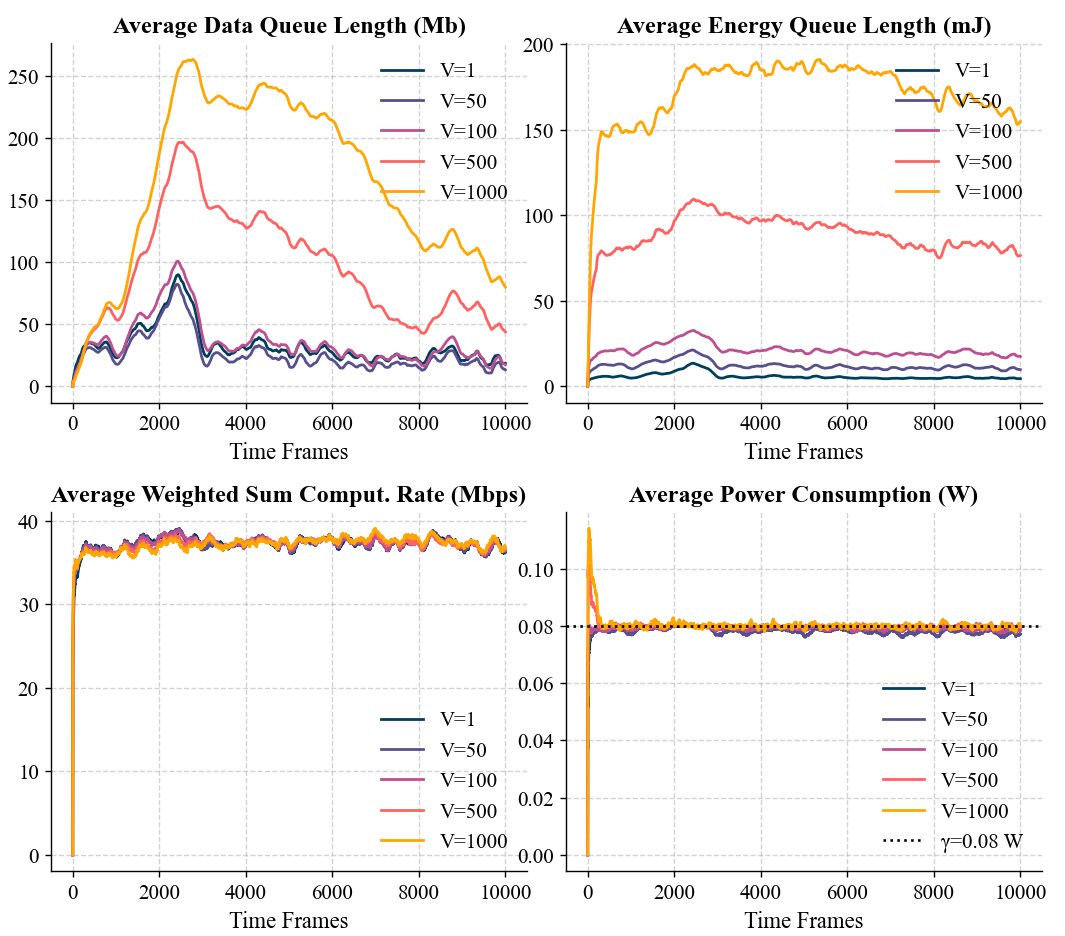} 
    \caption{Evolution curves of key system performance indicators over time frames under different weight parameters $V$.}
    \label{fig:ab_V}
\end{figure}

\begin{table}
    \centering
    \caption{Analysis of the influence of control parameter $\nu$  on system queue length, computing rate and power consumption}
    \label{tab:effect_nu}
    \begin{tabular}{c c c c c}
        \toprule
        $\nu$ & \makecell{Data Queue \\ (Mb)} & \makecell{Energy Queue \\ (mJ)} & \makecell{Weighted Rate \\ (Mbps)} & \makecell{Power \\ (W)} \\
        \midrule
        1    & 10.752   & 234.061 & 37.35 & 0.0919 \\
        50   & 15.128   & 6.857   & 37.35 & 0.0775 \\
        100  & 27.708   & 6.680   & 37.35 & 0.0789 \\
        500  & 988.697  & 35.035  & 37.52 & 0.0800 \\
        1000 & 2838.127 & 76.044  & 35.31 & 0.0788 \\
        \bottomrule
    \end{tabular}
\end{table}

\section{CONCLUSION}\label{sec:CONCLUSION}

This paper proposes a complete low-complexity solution to the online optimization problem of joint offloading decision and resource allocation in quantum-secure NOMA systems. The research first explicitly incorporates the constant circuit power consumption of the post-quantum security module into the multi-user optimization framework, establishing a multi-stage MINLP model encompassing binary offloading decisions, local CPU frequency control, and NOMA transmission power allocation. By introducing a virtual energy queue, the long-term average energy consumption constraint is transformed into an online, trackable queue-stability problem, providing a theoretical foundation for I system optimization in quantum-secure scenarios. Subsequently, a NOMA power allocation algorithm based on greedy backward induction is proposed. By deriving a recursive analytical expression for the inter-user interference term, the global nonconvex power allocation problem is strictly decomposed into multiple independent single-variable convex subproblems with closed-form solutions. The computational complexity of this algorithm is only O(N), achieving significant acceleration over the SCA benchmark and meeting real-time requirements for millisecond-level online decision-making in fast-fading channels.

\bibliographystyle{IEEEtran}
\bibliography{references}

\end{document}